\date{}
\newcommand{\mkv}{-\!\!\!\!\minuso\!\!\!\!-}
\newtheorem{theorem}{Theorem}
\newtheorem{lemma}{Lemma}
\newtheorem{proposition}{Proposition}
\newtheorem{conjecture}{Conjecture}
\newtheorem{corollary}{Corollary}
\newtheorem{definition}{Definition}
\newtheorem{example}{Example}
\newtheorem{remark}{Remark}
\definecolor{darkblack}{rgb}{0, .07, .5}
\definecolor{darkred}{rgb}{0.5,0,0}
\definecolor{mahogany}{rgb}{0.65, 0., 0.5}
\newcommand{\RN}[1]{%
	\textup{\uppercase\expandafter{\romannumeral#1}}%
}
\title{On the Source Model Key Agreement Problem}
\author{Hamidreza Abin and Amin Gohari\\
Department of Information Engineering\\
The Chinese University of Hong Kong\\
Sha Tin, NT, Hong Kong\\
\{hamabin,agohari\}@ie.cuhk.edu.hk
}
\begin{document}
	\maketitle
  
\begin{abstract}
We consider the source model key agreement problem involving two legitimate parties and an eavesdropper who observes 
$n$ i.i.d.\ samples of $X$ and $Y$ and $Z$, respectively. In this paper, we focus on one of the simplest instances where the key capacity remains open, specifically when \( X \) and \( Y \) are binary random variables and \( Z \) is a function of the pair \( (X, Y) \). The best-known upper bound on the key capacity is characterized by an inf-max optimization problem that generally lacks a closed-form solution. We provide general conditions under which the upper bound reduces to $I(X; Y)$. As an example, we consider the XOR setting in which 
$X$ and $Y$ are binary, and $Z$
 is the XOR of $X$ and $Y$. The upper bound reduces to $I(X;Y)$ for this source. Next, we conjecture that the rate $I(X; Y)$ is not achievable for the XOR source, and provide some ideas that might be useful for developing a new upper bound on the source model problem.
\end{abstract}

\section{Introduction}

The two-party source model problem consists of two legitimate parties, Alice and Bob, who aim to establish a shared secret key hidden from an eavesdropper, Eve \cite{1,10}. Alice, Bob, and Eve observe 
$n$ i.i.d.\ samples of $X$ and $Y$ and $Z$\textcolor{black}{,} respectively. Random variables $X,Y$\textcolor{black}{,} and $Z$ are distributed according to $P_{X,Y,Z}$, known to all parties. Through authenticated but public communication, Alice and Bob collaborate to generate a key that remains secret from Eve (extensions to multiple parties and continuous random variables can be found in \cite{CsNar,Chanomni,Nitinawarat,Tyagi,Chan2,chan2018optimality}; please see \cite{elgamal} for a detailed treatment). 
 In a non-interactive setting, Alice and Bob locally generate messages $F_1$ and $F_2$ from $P_{F_1|X^n}$ and $P_{F_2|Y^n}$, respectively, and exchange these messages.
 Then Alice and Bob create their own key from 
$K_A\sim P_{K_A|X^n,F_1,F_2}$ and  $K_B\sim  P_{K_B|Y^n,F_1,F_2}$, respectively.  On the other hand, in the interactive setting,  Alice initiates the process by generating a public message $F_1$ using the distribution $P_{F_1|X^n}$
 and shares it with Bob. Bob then responds with $F_2$
 derived from $P_{F_2|Y^n,F_1}$. This exchange continues for $r$ rounds, during which both senders alternately generate and share public messages
 $F_3,F_4,\cdots, F_r$. At the end of these interactions, Alice derives a key $K_A$	based on
 $P_{K_A|X^n,F_{1:r}}$, while Bob produces a corresponding key $K_B$ using $P_{K_B|Y^n,F_{1:r}}$.
In an
$(n,R,\delta)$ code, the keys must match with probability $1-\delta$:
\begin{align}
    \mathbb{P}[K_A=K_B]\geq 1-\delta,\label{pd1}
\end{align}
where 
$\delta$ is a small positive value.
The key rate is defined as $R=\frac{1}{n}H(K_A)$. $R$ must exhibit high entropy, quantified as:
\begin{align}
    R\geq \frac{1}{n}\log(|\mathcal{K}|)-\delta,\label{pd2}
\end{align}
where  
$\mathcal{K}$ denotes the alphabet of 
$K_A$ and $K_B$.  Furthermore, the keys should be nearly independent of Eve’s observations, formalized by:
\begin{align}
    \frac{1}{n}I(K_A;Z^n,F_{1:r})\leq \delta.\label{pd3}
\end{align}
Note that the value of $r$ for the non-interactive regime is $2$. We note that if we omit the factor \( 1/n \) in equation \eqref{pd3}, we obtain a seemingly stronger notion of secrecy; however, this is equivalent to the weaker form presented in \eqref{pd3}. For a proof of equivalence, see \cite{maurer2000information}.

We say that the rate $R$ is achievable if for every $\delta>0$, there is an $(n,\delta,R)$ code. 
The maximum achievable key rate is known as the source model secret-key (SK) capacity. It is denoted by 
$S(X;Y\|Z)$ in the interactive setting, and by
$S_{\text{ni}}(X;Y\|Z)$ in the non-interactive setting.

{\color{black}
Originally studied by Ahlswede and Csiszár~\cite{1} and Maurer~\cite{10}, determining the secret key capacity remains a fundamental open problem in network information theory. For an introduction to this problem, the reader may refer to Chapter~22 of \cite{elgamal}.}
In general scenarios, the key capacity remains unknown (both in the interactive and non-interactive communication settings) with the exception of certain special cases, such as when \(X, Y,\) and \(Z\) form a Markov chain in some order. As a side result, we derive the capacity for a new class of sources, which we call ``deterministic-erasure'' sources in Appendix \ref{appndx}.

One scenario where the capacity is known occurs when communication is restricted to one-way transmission from one party to another.
 The one-way SK capacity from $X$ to $Y$ equals \cite{1}:
\begin{align}
&S_{\text{ow}}(X\rightarrow Y\|Z)=\max_{(U,V)\mkv X\mkv (Y,Z)} I(U;Y|V) - I(U;Z|V),
\end{align}
where the conditions $|\mathcal{V}|\leq |\mathcal{X}|$ and $|\mathcal{U}|\leq |\mathcal{X}|$ can be imposed on $\mathcal{V}$, $\mathcal{U}$, and $\mathcal{X}$ representing the alphabets associated with $V$, $U$, and $X$, respectively \cite{elgamal}.\footnote{We can equivalently express \begin{align}
&S_{\text{ow}}(X\rightarrow Y\|Z)=\max_{V\mkv U\mkv X\mkv (Y,Z)} I(U;Y|V) - I(U;Z|V),
\end{align}This follows from noting that changing $U$ to $(U,V)$ will not change the mutual information terms, while introducing the additional Markov chain constraint. However, this comes at the cost of a larger cardinality bound on $U$.}
It is evident that $S_{\text{ow}}(X\rightarrow Y\|Z)$ establishes a lower bound on the SK capacity $S(X;Y\|Z)$. {\color{black}In particular, setting $U=X$ and $V$ constant yields
$$S(X;Y\|Z)\geq I(X;Y)-I(X;Z).$$
The best known lower bound is obtained in \cite{gohari-terminal1} as follows: consider random variables $U_1, U_2, \ldots, U_k$ satisfying the Markov chain conditions
\begin{align}
    & U_i \mkv (X, U_{1:i-1}) \mkv (Y,Z) \quad \text{for odd } i, \label{eq:LowerboundoddiMarkov}\\
    & U_i \mkv (Y,U_{1:i-1}) \mkv (X,Z) \quad \text{for even } i. \label{eq:LowerboundeveniMarkov}
\end{align}
For any integers $1 \leq m \leq k$, the secret key capacity satisfies $S(X;Y\|Z)$ is bounded from below by
\begin{align}
\Bigg[\sum_{\substack{i \geq m \\ \text{odd } i}} I(U_i; Y | U_{1:i-1}) - I(U_i; Z | U_{1:i-1})\Bigg] 
    + \Bigg[\sum_{\substack{i \geq m \\ \text{even } i}} I(U_i; X | U_{1:i-1}) - I(U_i; Z | U_{1:i-1})\Bigg]. \label{eq:lowerbound}
\end{align}
This bound is difficult to evaluate since both $m$ and $k$ are arbitrary and evaluation of the bound becomes increasingly difficult for large $m,k$. 
}

An early upper bound on the key capacity was \cite{10,1}
$$S(X;Y\|Z)\leq I(X;Y|Z).$$
Authors in \cite[pp. 1126, Remark 2]{1},\cite{MaurerWolf99}
improved this upper bound by realizing that degrading Eve cannot decrease the key rate, i.e., for any degradation $J$ of $Z$ through a Markov chain $J\mkv Z\mkv (X,Y)$ we have
$$S(X;Y\|Z)\leq S(X;Y\|J)\leq I(X;Y|J).$$
The quantity
\begin{align}I(X;Y\downarrow Z)\triangleq \min_{J\mkv Z\mkv (X,Y)}I(X;Y|J)\label{eqnInT}\end{align}
is called  the intrinsic information.  
More generally, we can relax the degraded assumption $J\mkv Z\mkv (X,Y)$ to a less noisy condition. The inequality 
$$S(X;Y\|Z)\leq S(X;Y\|J)$$ holds if the channel $P_{Z|X,Y}$ is less noisy than the channel $P_{J|X,Y}$. Even more generally, for any arbitrary $P_{J|X,Y}$, we  have {\color{black}\cite[Corollary 2]{gohari-terminal1}}
\begin{align}
S(X;Y\|Z)-S(X;Y\|J)\leq \max_{(U,V)\mkv (X,Y)\mkv (J,Z)} I(U;J|V)-I(U;Z|V)=S_{\text{ow}}(X,Y \rightarrow Z|J)\label{eqnd1}
\end{align} 
Observe that the right\textcolor{black}{-}hand side of the above equation vanishes when $P_{Z|X,Y}$ is less noisy than the channel $P_{J|X,Y}$, and was interpreted as ``the penalty of deviating
from the less-noisy condition'' in \cite{gohari2020coding}. Using \eqref{eqnd1} and the upper bound
$$
S(X;Y\|J)\leq I(X;Y|J)
$$
the following upper bound was proposed in {\color{black}\cite[Corollary 2]{gohari-terminal1}}:
 \begin{align}
S(X;Y\|Z)\leq \Psi(X;Y\|Z)&\triangleq\inf_J\max_{(U,V)\mkv (X,Y)\mkv (J,Z)} I(X;Y|J)+I(U;J|V)-I(U;Z|V)
    \label{main1}
 \end{align}

 Note that $J$ serves as an auxiliary receiver in the sense of \cite{gon21}. 
 The above bound is the best known upper \textcolor{black}{bound} on the key capacity. It involves a min-max optimization problem, which makes computing the upper bound challenging. In fact, it is not known whether the infimum over $J$ in \eqref{main1} is a minimum (see \cite{gohari-terminal1} for a discussion on the computability of this bound). A weaker, but more convenient upper bound was also proposed in  \cite{gohari-terminal1}

\begin{proposition}[\cite{gohari-terminal1}]\label{lem1}
    For any $P_{X,Y,Z}$ we have:
    \begin{align}
   S(X;Y\|Z)&\leq \inf_J\max_{(U,V)\mkv (X,Y)\mkv (J,Z)} I(X;Y|J)+I(U;J|V)-I(U;Z|V)\label{main2-1}
 \\&\leq \min_J I(X;Y|J)+I(X,Y;J|Z).\label{main2}
\end{align}
\end{proposition}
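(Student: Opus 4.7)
The plan is as follows. The first inequality \eqref{main2-1} is simply the bound $S(X;Y\|Z)\leq \Psi(X;Y\|Z)$ from \eqref{main1}, whose derivation was recalled in the discussion just above (combine \eqref{eqnd1} with $S(X;Y\|J)\leq I(X;Y|J)$). So only the second inequality requires work, and for this it suffices to show that for every fixed $J$,
\[
\max_{(U,V)\to (X,Y)\to (J,Z)}\bigl[I(U;J|V)-I(U;Z|V)\bigr]\leq I(X,Y;J|Z);
\]
adding $I(X;Y|J)$ to both sides and taking the infimum over $J$ then yields \eqref{main2}.

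For fixed $J$ and admissible $(U,V)$, I would first expand $I(U;J,Z|V)$ in two ways via the chain rule to obtain
\[
I(U;J|V)-I(U;Z|V)=I(U;J|V,Z)-I(U;Z|V,J)\leq I(U;J|V,Z),
\]
then use $I(U,V;J|Z)=I(V;J|Z)+I(U;J|V,Z)\geq I(U;J|V,Z)$, and finally apply data processing. The Markov hypothesis $(U,V)\to (X,Y)\to (J,Z)$ is equivalent to $I(U,V;J,Z|X,Y)=0$, which in turn implies $P(J|X,Y,Z,U,V)=P(J|X,Y,Z)$, i.e.\ the conditional Markov chain $(U,V)\to (X,Y)\to J$ given $Z$. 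Conditional data processing then gives $I(U,V;J|Z)\leq I(X,Y;J|Z)$, closing the chain of inequalities uniformly in $(U,V)$.

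I do not expect any real obstacle; the argument is a short sequence of standard conditional-mutual-information identities. The only subtle point is the passage from $(U,V)\to (X,Y)\to (J,Z)$ to the conditional Markov chain $(U,V)\to (X,Y)\to J$ given $Z$, which is immediate from the definition of conditional independence.
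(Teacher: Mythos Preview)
Your argument is correct. The paper does not give its own proof of Proposition~\ref{lem1}; it simply quotes the result from \cite{gohari-terminal1}, so there is no in-paper derivation to compare against. Your chain-rule identity $I(U;J|V)-I(U;Z|V)=I(U;J|V,Z)-I(U;Z|V,J)$, the bound $I(U;J|V,Z)\leq I(U,V;J|Z)$, and the conditional data-processing step $I(U,V;J|Z)\leq I(X,Y;J|Z)$ (which follows, as you note, from $I(U,V;J|X,Y,Z)=0$) are all valid. One minor point worth making explicit: the outer infimum in \eqref{main2-1} depends only on $P_{J|X,Y}$, whereas the minimum in \eqref{main2} ranges over $P_{J|X,Y,Z}$; your pointwise inequality, applied with an arbitrary coupling $P_{J|X,Y,Z}$, indeed yields the stated comparison after minimizing over that larger domain, since every $P_{J|X,Y}$ arises as a marginal of some $P_{J|X,Y,Z}$.
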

We denote 
\begin{align}\hat{\Psi}(X;Y\|Z)\triangleq\min_{J}I(X;Y|J)+I(X,Y;J|Z).\label{eqnAAAApsihat}\end{align} 
Notice that the upper bound $\Psi(X;Y\|Z)$ in \eqref{main1} depends solely on 
$P_{X,Y,J}$ and $P_{X,Y,Z}$, meaning the minimization is restricted to  $P_{J|X,Y}$. However, the weaker  upper bound $\hat{\Psi}(X;Y\|Z)$ in \eqref{main2}  requires minimization over 
 $P_{J|X,Y,Z}$. On the other hand, the weaker upper bound $\hat{\Psi}(X;Y\|Z)$ does not involve the auxiliary random variables $U$ and $V$.

{\color{black}
\begin{remark}
There are notable parallels between key agreement in classical and quantum scenarios \cite{gisin2002linking,gisin2000linking}. Furthermore, the quantum counterparts of the upper bounds in the key agreement problem represent independently interesting objects of study; while intrinsic information serves primarily as an upper bound on $S(X;Y\|Z)$ in the classical setting, its quantum counterpart—squashed entanglement—acts as a meaningful measure of entanglement for bipartite quantum systems \cite[Section III]{christandl2004squashed}. On the one hand, the intrinsic information upper bound has led to the definition of squashed entanglement. On the other hand, insights from quantum information theory have been used to derive new upper bounds for the classical key capacity \( S(X;Y\|Z) \), as demonstrated in \cite{gisin2001bound, renner2003new}. In \cite{keykhosravi2014source}, it is demonstrated that in the classical framework, if we permit \( J \) to be a quantum system while maintaining \( (X, Y, Z) \) as classical, the upper bound on \( S(X; Y \| Z) \) in equation \eqref{eqnAAAApsihat} remains valid. However, it is important to note that the minimum in  \eqref{eqnAAAApsihat} should be replaced by an infimum, as there is currently no known dimension bound on the size of the quantum system \( J \) (see \cite{beigi2013dimension}).\label{rmks45}
\end{remark}

\begin{remark} 
The question of when key agreement is possible, i.e., when $S(X;Y \| Z) > 0$, has been an important topic in the literature \cite{10,MaurerWolf99,OrlitskyWigderso,gohari2020coding}.
Orlitsky and Wigderson show that if $S(X;Y \| Z) > 0$, just two rounds of interactive communication suffice to achieve a positive key rate~\cite{orlitsky1993secrecy}. 
Although multi-letter characterizations exist for the condition \( S(X;Y \| Z) = 0 \)~\cite{gohari2020coding,orlitsky1993secrecy}, 
these results are not in single-letter form and thus cannot be used to verify whether \( S(X;Y \| Z) = 0 \). 
The only known method to prove $S(X;Y \| Z) = 0$ is to show that one of the known upper bounds on $S(X;Y\|Z)$ vanishes. 
It is indeed conjectured in~\cite{MaurerWolf99} that $S(X;Y \| Z) = 0$ if and only if the intrinsic information vanishes, i.e., $I(X;Y \downarrow Z) = 0$. Given the upper bound  in Proposition~\ref{lem1}, the condition \( I(X;Y \downarrow Z) = 0 \) in the conjecture should be refined to state that there exists a random variable \( J \) such that (i) \( X \rightarrow J \rightarrow Y \) forms a Markov chain, and (ii)
    the channel \( P_{Z|X,Y} \) is less noisy than the channel \( P_{J|X,Y} \).
\end{remark}

{\color{black}
\subsection{Overview of our contributions}

Consider the upper bound
\[
S(X;Y\|Z)\leq \Psi(X;Y\|Z)
= \inf_{J}\; \max_{(U,V)\mkv (X,Y)\mkv (J,Z)} \Big\{ I(X;Y| J) + I(U;J| V) - I(U;Z| V) \Big\}.
\]
No cardinality bound on the alphabet of $J$ is known. However, even after fixing a choice of $P_{J| X,Y,Z}$, the cardinality of the alphabets of $U$ and $V$ in the inner maximization can be bounded from above by $|\mathcal{X}||\mathcal{Y}|$. Thus, after fixing $J$, one still needs to solve an optimization over $P_{U,V| X,Y,Z}$ with  $|\mathcal{X}||\mathcal{Y}||\mathcal{Z}|\cdot (|\mathcal{X}||\mathcal{Y}|)^2$ free parameters, rendering a direct evaluation of the upper bound practically infeasible.

Our first observation is that when $Z$ is a function of $(X,Y)$, the above upper bound coincides with the simpler upper bound
\[
S(X;Y\|Z)\leq \hat{\Psi}(X;Y\|Z) = \min_{J}\; I(X;Y| J) + I(X,Y;J| Z),
\]
which eliminates the auxiliary variables $(U,V)$ in the inner maximization. When $|\mathcal{X}|$ and $|\mathcal{Y}|$ are small and $Z$ is a function of $(X,Y)$, it is feasible to numerically evaluate $\hat{\Psi}(X;Y\|Z)$; in this case one can impose a cardinality bound $|\mathcal{J}|\leq |\mathcal{X}||\mathcal{Y}|$ on $J$. However, $I(X;Y|J)+I(X,Y;J|Z)$ is neither convex nor concave in $P_{J| X,Y,Z}$. Consequently, even if one has a good guess for the minimizing $P_{J| X,Y,Z}$, it may be hard to certify it as a global minimizer.

Our first contribution is a ``weak duality'' technique to address this certification problem. We believe the technique is of independent interest and may be useful for other non-convex optimization problems in network information theory.

\subsubsection{Main Contribution 1}

Many problems in network information theory involve rate regions or non-convex optimizations over auxiliary random variables. As an example, consider the broadcast channel: Marton's inner bound involves three auxiliary random variables representing private and common messages. For certain structured channels (e.g., semi-deterministic or erasure Blackwell channels), the global optimizers take a particular form. In some cases, this structure is provable; in others, it is only supported by numerical evidence. To address questions of this type, we propose a ``weak-duality'' idea and illustrate it in the source model.

\textbf{Statement of the issue:}
Let \(\mathcal{P}\) denote the set of joint distributions \(P_{X,Y,Z}\) for which the global minimizer of \(\hat{\Psi}(X;Y\|Z)\) is a constant \(J\). Our goal is to characterize the set \(\mathcal{P}\).  Equivalently, we seek to identify all \(P_{X,Y,Z}\) satisfying  
\[
I(X;Y) - I(X;Y|J) - I(X,Y;J|Z) \leq 0, \quad \forall\, P_{J|X,Y,Z}.
\]
The equality in this inequality holds when \(J\) is a constant. Hence, we may rewrite the condition as  
\[
I(X;Y) - H(X,Y|Z) = \min_{J} \big[ I(X;Y|J) + H(X,Y|J,Z) \big].
\]

Observe that \(\min_{J} \big[ I(X;Y|J) - H(X,Y|J,Z) \big]\) computes the \emph{lower convex envelope} of the mapping  
\begin{align}
Q_{X,Y,Z} \mapsto I_Q(X;Y) - H_Q(X,Y|Z) \label{mapping2}
\end{align}  
evaluated at \(Q_{X,Y,Z} = P_{X,Y,Z}\) (see \cite{nair2013upper} for details on the relation between lower convex envelopes and auxiliary random variables). The mapping in \eqref{mapping2} is defined for all \(Q_{X,Y,Z}\) on fixed alphabets \(\mathcal{X} \times \mathcal{Y} \times \mathcal{Z}\).  

Given a function \(\varphi(\cdot)\), its lower convex envelope \(\mathcal{K}[\varphi]\) is defined as the largest convex function dominated by \(\varphi\), i.e.,  
\[
\mathcal{K}[\varphi](x) = \sup \{ \tilde{\varphi}(x) \mid \tilde{\varphi} \text{ convex and } \tilde{\varphi} \leq \varphi \}.
\]  
See Figure \ref{fig:lcews} for an illustration. 

The set \(\mathcal{P}\) consists precisely of those distributions \(P_{X,Y,Z}\) where the mapping in \eqref{mapping2} coincides with its lower convex envelope. For a general non-convex function \(\varphi(x)\), computing \(\mathcal{K}[\varphi](x^\star)\) at a point \(x^\star\) requires global knowledge of \(\varphi(\cdot)\); local information about its value near \(x^\star\) is insufficient. Thus, determining the set \(\{x : \mathcal{K}[\varphi](x) = \varphi(x)\}\) necessitates a global access to the function \(\varphi(\cdot)\). However, given that the mapping in \eqref{mapping2} is not arbitrary but constructed from Shannon entropy measures, can this structural property enable us to derive global conclusions from local analysis?  

\begin{figure}  
    \centering
    \includegraphics[width=0.7\textwidth]{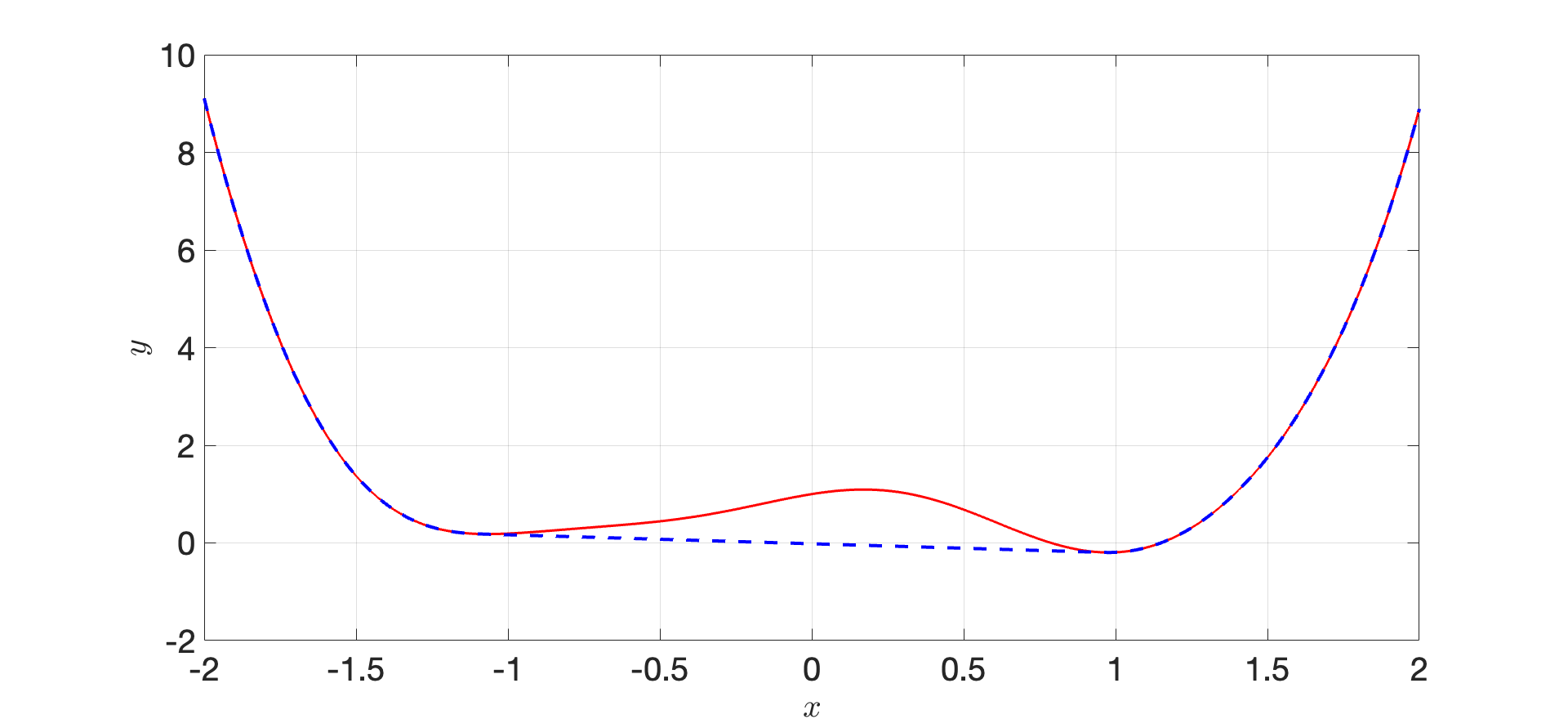}  
  \caption{Visualization of a function and its lower convex envelope. The function coincides with its lower convex envelope for $x\geq 1$ and $x\leq -1$. }
\label{fig:lcews}  
\end{figure}

\textbf{From local to global:}
Our approach to bridge global to local is as follows. Define
\[
f(X,Y,Z,J) \triangleq I(X;Y) - I(X;Y| J) - I(X,Y;J| Z).
\]
We seek the set of $P_{X,Y,Z}$ such that
\[
\max_{J} f(X,Y,Z,J)\;\leq\; 0,
\]
where the maximization is over $P_{J| X,Y,Z}$. Suppose we can find another functional $g(X,Y,Z,T)$ such that \underline{for every $P_{X,Y,Z}$} we have
\[
\max_{J} f(X,Y,Z,J)\;\leq\; \min_{T} g(X,Y,Z,T),
\]
where the minimization is over $P_{T| X,Y,Z}$. This is akin in spirit to linear-programming duality, but carried out at the level of mutual-information expressions. If such a $g$ exists, then for a given $P^{\star}_{X,Y,Z}$ one can certify
\[
\max_{J} f(X,Y,Z,J)\;\leq\; 0
\]
by presenting a single witness $P_{T| X,Y,Z}$ for which
\[
g(X,Y,Z,T)=0
\]
holds at $P^{\star}_{X,Y,Z}$. Crucially, only a single witness and a local verification at $P^{\star}_{X,Y,Z}$ is needed.

In this paper, we take
\[
g(X,Y,Z,T) \triangleq I(T;Z) + I(X;Y| T) + I(T;Z| X,Y).
\]
One then needs to verify that for every $P_{X,Y,Z,T,J}$,
\[
I(X;Y) - I(X;Y| J) - I(X,Y;J| Z)
\;\leq\;
I(T;Z) + I(X;Y| T) + I(T;Z| X,Y).
\]
The above is a non-Shannon-type inequality. It can, however, be established using Shannon-type inequalities if we impose the constraint $I(T;J| X,Y,Z)=0$. This constraint can be added without loss of generality because the expression depends only on the marginals $P_{X,Y,Z,J}$ and $P_{X,Y,Z,T}$. See the proof of Theorem~\ref{Th:gen:classic}.

To demonstrate the power of this duality idea, we consider the case where $X$ and $Y$ are binary and $Z$ is a function of $(X,Y)$. We show that a tight characterization of $\mathcal{P}$ can be obtained in the binary case. Namely, if $P_{X,Y,Z}\in\mathcal{P}$, i.e., constant $J$ is optimal, we can explicitly identify a witness $P_{T| X,Y,Z}$ for which $g(X,Y,Z,T)=0$. In addition, we also study cases where $P_{X,Y,Z}\notin\mathcal{P}$, i.e., a non-constant $J$ is the minimizer; here, we find the minimizer $J$ as well and show that the minimizer makes $I(X;Y|J)+I(X,Y;J|Z)=0$. Thus, we have $S(X;Y\|Z)=0$ for $P_{X,Y,Z}\notin\mathcal{P}$.
See Theorem~\ref{th:binary} for details. A particular example of interest is when $Z=X\oplus Y$. The set $\mathcal{P}$ restricted to $Z=X\oplus Y$ is the entire set of \emph{all} joint distributions on $\mathcal{X}\times\mathcal{Y}$ and we obtain $\hat{\Psi}(X;Y\|Z)=\Psi(X;Y\|Z)=I(X;Y)$ for every $P_{X,Y}$, when $Z=X\oplus Y$.

\subsubsection{Main Contribution 2}

Having established that the best known upper bound on $S(X;Y\|Z)$ reduces to $I(X;Y)$ in some cases, our second contribution is evidence of the suboptimality of this upper bound. We also provide a new upper bound for the non-interactive case, which is always no larger than the best known bound. Unfortunately, evaluating our new bound is challenging because it has essentially the same complexity as $\Psi(X;Y\|Z)$, which, as discussed earlier, is computationally intractable.

As discussed earlier, we use the ``weak-duality'' idea to show that when $X$ and $Y$ are binary and $Z=X\oplus Y$, we have $\hat{\Psi}(X;Y\|Z)=\Psi(X;Y\|Z)=I(X;Y)$. Can it be that $I(X;Y)$ is also achievable, i.e., $S(X;Y\|Z)=I(X;Y)$?

\textbf{Achievability of the key rate $I(X;Y)$.}
If $Z$ is a constant random variable, then $S(X;Y\|Z)=I(X;Y)$. One strategy to achieve this rate has one party (say Alice) randomly \textcolor{black}{bin} $X^n$ at the Slepian--Wolf rate of $nH(X| Y)$ bits and communicate the bin index to Bob over the public channel. Bob then recovers $X^n$ and the parties generate $nH(X)$ bits of common randomness. Since $nH(X| Y)$ bits are revealed publicly, they perform privacy amplification and output $n(H(X)-H(X| Y))=nI(X;Y)$ secret bits that remain hidden from the eavesdropper. This protocol is communication-heavy: to generate $nI(X;Y)$ secret bits, it reveals $nH(X| Y)$ bits about $X^n$ on the public channel. A symmetric protocol with Bob speaking reveals $nH(Y| X)$ bits. One may ask whether interactive communication can reduce the public communication needed to achieve $I(X;Y)$. While this is possible in certain instances, the answer is negative in general: the minimum public communication rate to achieve $I(X;Y)$ is characterized in~\cite{Tyagi}. It has been conjectured in~\cite{Tyagi} (and proved in some cases) that when $X$ and $Y$ are binary, at least $n\min\{H(X| Y),H(Y| X)\}$ bits must be revealed publicly to achieve a key rate of $I(X;Y)$.

Now suppose $Z$ is a general, non-constant random variable (i.e., the eavesdropper has side information $Z$ about $(X,Y)$ in addition to access to the public transcript). Clearly, $S(X;Y\|Z)\le I(X;Y)$. But can we still achieve $I(X;Y)$? This specific question does not appear to have been explicitly addressed in the literature. We conjecture that when $X$ and $Y$ are binary and $Z=X\oplus Y$, the rate $I(X;Y)$ is \emph{not} achievable for a general $P_{X,Y}$. To begin, consider the one-way protocols above when $Z=X\oplus Y$. If $H(Z)>\max\{H(X),H(Y)\}$, then $H(X| Z)<H(X| Y)$ and $H(Y| Z)<H(Y| X)$. In this case, the one-way protocols fail because, assuming that the bin index of $X^n$ at rate $H(X|Y)$ is revealed, the eavesdropper can utilize $Z^n$ and a Slepian--Wolf decoder to reconstruct the entire sequence $X^n$.

Next, consider interactive protocols. One strategy is to design an interactive transcript $F_1,F_2,\ldots,F_r$ that yields secret keys such that
\begin{align*}
\mathbb{P}[K_A=K_B] &\approx 1,\\
\frac{1}{n}H(K_A) &\approx I(X;Y),\\
\frac{1}{n}I(K_A;F_{1:r}) &\approx 0,\\
\frac{1}{n}H(Z^n| F_{1:r}) &\approx 0,
\end{align*}
so that the parties both (i) generate keys of maximum rate $I(X;Y)$ that are independent of the transcript $F_{1:r}$, and (ii) interactively compute the XOR $Z^n$ (a function-computation problem). If both goals are met, then $K_A,K_B$ are independent of $(F_{1:r},Z^n)$. But can these two goals be met simultaneously?

First, consider the non-interactive case with two messages $F_1$ and $F_2$ generated via $P_{F_1| X^n}$ and $P_{F_2| Y^n}$, respectively. The goal is to compute $Z^n=X^n\oplus Y^n$ from $(F_1,F_2)$ alone. This is the classical modulo-two sum problem in source coding, originally studied by Körner and Marton~\cite{1056022}. One can import ideas from this line of work (e.g., \cite{sga15,nair2020}) to study the source model problem; this is one of our key ideas leading to a new upper bound in the non-interactive case.

For the interactive case with multiple rounds, we assume
\[
\frac{1}{n}H(Z^n| F_{1:r}) \approx \Delta
\]
for some $\Delta\ge 0$. The case $\Delta=0$ corresponds to an interactive modulo-two sum problem. We provide evidence that $\Delta$ cannot be close to zero when achieving the rate $I(X;Y)$. Moreover, we derive a new upper bound for interactive protocols as a function of a given $\Delta$. To explain this bound, we first overview the existing best-known upper bound.

\textbf{Overview of the proof of the best-known upper bound.}
Roughly speaking, the main steps to prove $S(X;Y\|Z)\le \Psi(X;Y\|Z)$ are as follows:\footnote{The proof sketched here is not the original proof given in \cite{gohari-terminal1}; instead, it follows the add-and-subtract ``auxiliary receiver'' approach; see, e.g., Section~II.A in \cite{wen2024new}.} take an arbitrary $(n,R,\delta)$ code generating keys $K_A\approx K_B$ via interactive communication $F_{1:r}$. Then
\begin{align*}
R \;\approx\; \frac{1}{n}H(K_A) \;\approx\; \frac{1}{n}H(K_A| Z^n,F_{1:r})
&= \frac{1}{n}H(K_A| J^n,F_{1:r})
  + \frac{1}{n}\big(H(K_A| Z^n,F_{1:r}) - H(K_A| J^n,F_{1:r})\big).
\end{align*}
The difference term can be single-letterized:
\begin{align*}
\frac{1}{n}\big(H(K_A| Z^n,F_{1:r}) - H(K_A| J^n,F_{1:r})\big)
&= \frac{1}{n}\big(I(K_A;J^n| F_{1:r}) - I(K_A;Z^n| F_{1:r})\big)\\
&= \frac{1}{n}\sum_{i=1}^{n}\big(I(K_A;J^i,Z_{i+1}^n| F_{1:r}) - I(K_A;J^{i-1}, Z_{i}^n| F_{1:r})\big)\\
&= \frac{1}{n}\sum_{i=1}^{n}\big(I(K_A;J_i| F_{1:r},J^{i-1},Z_{i+1}^n) - I(K_A;J^{i-1} Z_i| F_{1:r},J^{i-1},Z_{i+1}^n)\big)\\
&= I\big(K_A;J_Q| Q,F_{1:r},J^{Q-1},Z_{Q+1}^n\big) - I\big(K_A;Z_Q| Q,F_{1:r},J^{Q-1},Z_{Q+1}^n\big),
\end{align*}
where $Q$ is a time-sharing variable independent of $(K_A,K_B,F_{1:r},X^n,Y^n,Z^n)$. Identifying $U=K_A$ and $V=(Q,F_{1:r},J^{Q-1},Z_{Q+1}^n)$ yields
\[
\frac{1}{n}\big(H(K_A| Z^n,F_{1:r}) - H(K_A| J^n,F_{1:r})\big)
\;\le\;
\max_{(U,V)\mkv (X,Y)\mkv (J,Z)} \big\{ I(U;J| V) - I(U;Z| V) \big\}.
\]
Next,
\begin{align*}
\frac{1}{n}H(K_A| J^n,F_{1:r})
&\approx \frac{1}{n}I(K_A;K_B| J^n,F_{1:r})\\
&\le \frac{1}{n}I(X^n;Y^n| J^n,F_{1:r})\\
&\le \frac{1}{n}I(X^n;Y^n| J^n)\\
&= I(X;Y| J),
\end{align*}
where the inequality $I(X^n;Y^n| J^n,F_{1:r})\le I(X^n;Y^n| J^n)$ can be proved by induction on $r$.

\textbf{Improvement to the upper bound.}
Our new idea is to exploit
\[
\frac{1}{n}H(Z^n| F_{1:r}) \approx \Delta
\]
and impose a corresponding constraint on the auxiliary random variable $V=(Q,F_{1:r},J^{Q-1},Z_{Q+1}^n)$. This is formalized in Theorem~\ref{tight:the1}. We provide explicit upper bounds in the non-interactive case. Obtaining explicit bounds for the interactive case requires explicit bounds on $\Delta$ for protocols achieving the key rate $I(X;Y)$, which we leave for future work.

We also discuss other ideas for improving the upper bound. For instance, we show that using multi-letter versions of $\hat{\Psi}$ does not yield improved bounds. Moreover, as discussed above, the identity
\eqref{eqnd1} is the key step in deriving the best-known upper bound: in that step we replace $Z$ with an auxiliary receiver $J$. To generalize this argument, one may also replace $X$ and $Y$ with other variables. We propose the following conjecture: for any $P_{X,Y,Z,X',Y',Z'}$,
\[
S(X;Y\|Z)-S(X';Y'\|Z')\;\leq\; S_{\text{ow}}(X,Y\!\to\! Z'\|Z)
+ I(Y',Z';X| X') + I(X',Z';Y| Y') + I(X;Y| X',Y',Z').
\]
This conjecture generalizes \eqref{eqnd1} (it reduces to \eqref{eqnd1} by setting $X=X'$ and $Y=Y'$).

\color{black}

\textbf{Summary of our contributions:} 
The main results of this paper can be summarized as follows:
\begin{itemize}
    \item We observe that $\hat{\Psi}(X;Y\|Z)=\Psi(X;Y\|Z)$ if $Z$ is a function of $(X,Y)$. This is immediate by the choice of $U=X,Y$ and $V=Z$ in \eqref{main2-1} which would reduce the expression to that in \eqref{main2}.
    \item We find sufficient conditions {\color{black} provided  in Theorem \ref{Th:gen:classic} and Lemma \ref{lem:con-hull}} that imply that
    $$\hat{\Psi}(X;Y\|Z)=\Psi(X;Y\|Z)=I(X;Y).$$
    In particular, the sufficient conditions  are satisfied when $X$ and $Y$ are binary random variables (with an arbitrary joint distribution) and  $Z$ is the XOR of $X$ and $Y$ {\color{black}(see Theorem \ref{th:binary})}. We call this setting the XOR case. {\color{black} We also discuss other binary operations such as AND or OR in Theorem \ref{th:binary}}.
    \item 
    We conjecture that $S(X;Y\|Z)\neq I(X;Y)$ in the XOR case (see Section \ref{dis:tightness}). 
        Showing this would require developing a new upper bound for the source model problem. {\color{black} The XOR example is particularly interesting because it presents a clean case where previously known techniques -- such as reducing Eve's information via intrinsic mutual information -- fail to improve the basic $I(X;Y)$ upper bound. This demonstrates the need for entirely new ideas. }
        Partial results are also provided in our attempt to develop a new upper bound. {\color{black} By adapting techniques from the modulo-two sum problem in source coding, we derive an explicit bound for the non-interactive case. However, this bound is unfortunately difficult to evaluate.}
\end{itemize}

{\color{black} 
We conclude the introduction with a brief historical remark. An early upper bound on $S(X;Y\|Z)$ is the intrinsic-information bound, defined in~\eqref{eqnInT}. It evaluates the common information between Alice's and Bob's sources under Eve's most favorable choice among all of her possible ``perspectives'' (i.e., post-processings of $Z$). Although this bound was originally conjectured to be tight~\cite{MaurerWolf99}, Renner and Wolf~\cite{renner2003new} constructed a sophisticated counterexample showing that it can be strictly suboptimal. Conceptually, while the intrinsic-information bound weakens Eve by allowing her to degrade her observation, the upper bound in~\cite{renner2003new} moves in the opposite direction by endowing Eve with additional information. Subsequently, by introducing the idea of a fictitious terminal and developing another intricate example,~\cite{gohari-terminal1} established the suboptimality of the bound in~\cite{renner2003new}. In this paper we show that, for the simple binary XOR source, none of these increasingly elaborate techniques for upper bounding $S(X;Y\|Z)$ improves upon the naive bound $I(X;Y)$, which is intuitively unlikely to be tight. In contrast to prior counterexamples of growing complexity, the XOR setting offers a clean and simple example; ideas developed for this source are likely to extend to more general sources as well.
}

The rest of this paper is organized as follows: Section \ref{sec:main:re} evaluates the upper bound $\Psi(X;Y\|Z)$ for some settings. Section \ref{sec:discussionGapp} discusses the gap between $S(X;Y\|Z)$ and $\Psi(X;Y\|Z)$, and provides a conjecture on $S(X;Y\|Z)$. Finally, Section  \ref{dis:tightness} discusses the tightness of the upper bound in the XOR case.
Additionally, in Appendix \ref{appndx}, we compute the secret key capacity for a class of sources where $Z$ is a function of $X$ and $P_{Y|X}$ corresponds to an erasure channel.


\emph{Notation and Definitions:} Random variables are represented by uppercase letters, whereas their realizations are indicated by lowercase letters. It is assumed that all random variables possess finite alphabets. Sets are denoted by calligraphic letters. 
 The notation $X\mkv Y\mkv Z$ signifies that $X$ and $Z$ are conditionally independent given $Y$, which implies that $P_{X,Z|Y}=P_{X|Y}P_{Z|Y}$. In this scenario, we assert that $X\mkv Y\mkv Z$ constitutes a Markov chain. The sequences $(U_1, U_2, \ldots, U_{i})$  and $(U_i,U_{i+1},\cdots,U_n)$ are denoted as $U^i$ and $U_{i}^{n}$ respectively. For a sequence $X^n=(X_1, X_2,\cdots, X_n)$, we use $X_{\backslash i}$ to denote $(X^{i-1},X_{i+1}^{n})$. The entropy of a random variable \(X\) is represented as \(H(X)\), while the mutual information between \(X\) and \(Y\) is denoted by \(I(X;Y)\). We say $Q_{X}\ll P_{X}$ if $Q_X(x)=0$ whenever $P_X(x)=0$. 


\section{Optimality of constant $J$}\label{sec:main:re}
The upper bound \begin{align}
\hat{\Psi}(X;Y\|Z)&=\min_{J}I(X;Y|J)+I(X,Y;J|Z).
 \end{align}
 satisfies \begin{align}\hat{\Psi}(X;Y\|Z)\leq I(X;Y).\label{eqntu}\end{align} This follows from setting $J$ constant. The main results of this section identify sufficient conditions for equality in \eqref{eqntu}. 

\begin{theorem}\label{Th:gen:classic}
Take some arbitrary $P_{X,Y,Z}$. Assume that one can find an auxiliary random variable $T$ on some arbitrary alphabet such that 
$I(T;Z|X,Y)=I(X;Y|T)=I(T;Z)=0$. Then 
$$\hat{\Psi}(X;Y\|Z)=I(X;Y).$$
\end{theorem}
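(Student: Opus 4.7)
The plan is to establish $\hat{\Psi}(X;Y\|Z) \geq I(X;Y)$; combined with the opposite direction already noted in (\ref{eqntu}), this yields the claimed equality. Equivalently, for every candidate $J$ with some law $P_{J|X,Y,Z}$ one must show
\[
I(X;Y|J) + I(X,Y;J|Z) \geq I(X;Y).
\]

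My first step is to smuggle $T$ into the joint distribution of $(X,Y,Z,J)$ by declaring $P_{T|X,Y,Z,J} := P_{T|X,Y}$. This choice preserves the three hypotheses on $T$ (they depend only on the marginal $P_{T,X,Y,Z}$) and additionally yields the Markov chain $T \to (X,Y) \to (J,Z)$, i.e., $I(T;J|X,Y) = I(T;Z,J|X,Y) = 0$. With $T$ adjoined, I split both terms on the left via standard chain rules. For $I(X,Y;J|Z)$, the relation $I(T;J|X,Y,Z)=0$ gives $I(X,Y;J|Z) = I(T;J|Z) + I(X,Y;J|T,Z)$, and the only nontrivial inequality of the proof, $I(T;J|Z) \geq I(T;J)$, follows from $I(T;Z)=0$ via $I(T;J|Z) = I(T;J) + I(T;Z|J) \geq I(T;J)$. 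For $I(X;Y|J)$, expanding $I(X;Y,T|J)$ two ways yields $I(X;Y|J) = I(X;T|J) + I(X;Y|T,J) - I(X;T|Y,J)$.

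The finishing move is a careful regrouping. The identity $I(X;T|J)+I(T;J) = I(X;T)+I(T;J|X)$ merges the two new pieces; the Markov relation $I(T;J|X,Y)=0$ converts $I(X;T|Y)-I(X;T|Y,J)$ into $I(T;J|Y)$; and the hypothesis $I(X;Y|T)=0$ rewrites $I(X;Y)$ as $I(X;T)-I(X;T|Y)$. After these substitutions the gap $I(X;Y|J) + I(X,Y;J|Z) - I(X;Y)$ collapses to
\[
I(T;J|X) + I(T;J|Y) + I(X;Y|T,J) + I(X,Y;J|T,Z),
\]
each summand of which is a conditional mutual information, hence nonnegative. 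The main obstacle is organizational: one has to select a decomposition in which no non-sign-definite cross term survives. Since each of the three hypotheses on $T$ plays a distinct role---furnishing the extension, providing the single inequality $I(T;J|Z)\geq I(T;J)$, and rewriting $I(X;Y)$---the correct expansion is effectively forced once one insists on invoking only these three facts.
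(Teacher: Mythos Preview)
Your proof is correct and is essentially the same as the paper's. Both use the identical coupling (since $I(T;Z|X,Y)=0$ forces $P_{T|X,Y,Z}=P_{T|X,Y}$, your extension and the paper's $P_{T|X,Y,Z}P_{J|X,Y,Z}$ coincide), and both rely on the same chain-rule identities and the same single genuine inequality $I(T;J|Z)\geq I(T;J)$; the paper presents the argument as a chain of upper bounds on $I(X;Y)-I(X;Y|J)$, while you compute the gap directly and display it as a sum of nonnegative terms, but the slack terms discarded in the two arguments are identical.
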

\begin{proof}
Take an arbitrary $P_{J|X,Y,Z}$. Consider the coupling 
    $$P_{T,X,Y,Z,J}\triangleq P_{T|X,Y,Z}P_{J|X,Y,Z}.$$
    Thus, we have
        $$I(T;Z)=I(X;Y|T)=I(T;J|X,Y)=I(T;J|X,Y,Z)=0.$$
        The proof follows from the following identity:
\begin{align*}
    &I(X;Y)-\left\{I(X;Y|J)+I(X,Y;J|Z)\right\}\\&={I(T;Z)+I(X;Y|T)+I(T;Z|X,Y)+2I(T;J|X,Y,Z)}
    \\&\qquad-\left\{I(X;Y|J,T)+I(X,Y;J|T,Z)\right\}\\&\qquad-I(J;T|X)-I(J;T|Y)-I(T;Z|J)-I(T;Z|X,Y,J)
    \\&\leq 0.
\end{align*}
\end{proof}
\begin{remark}\label{remarl:1}
$\hat{\Psi}(X;Y\|Z) = I(X;Y)$ is equivalent to the inequality 
\[
I(X;Y|J) + I(X,Y;J|Z) \geq I(X;Y), \quad \forall P_{J|X,Y,Z}.
\]
When $Z$ is a function of $(X,Y)$, this inequality can be equivalently written as
\[
I(J;X,Y,Z) \geq \frac{1}{2} I(J;X) + \frac{1}{2} I(J;Y) + \frac{1}{2} I(J;Z), \quad \forall P_{J|X,Y,Z}.
\]
This condition holds if and only if $(\tfrac{1}{2}, \tfrac{1}{2}, \tfrac{1}{2})$ belongs to the hypercontractivity ribbon for random variables $(X,Y,Z)$. Recall that the hypercontractivity ribbon for random variables $X_1, X_2, \ldots, X_k$ is defined as the set of $\lambda_i$ such that for any $P_{U|X_{[k]}}$ we have
\[
I(X_{[k]}; U) \geq \sum_i \lambda_i I(X_i; U).
\]
\end{remark}
\begin{remark} 
Theorem \ref{Th:gen:classic} readily extends to quantum systems: consider quantum systems $X,Y,Z,J$ with some arbitrary joint density matrix $\rho_{X,Y,Z,J}$. The inequality
\[
I(X;Y|J) + I(X,Y;J|Z) \geq I(X;Y)
\]
holds if there exists an extended state $\rho_{X,Y,Z,J,T}$ (whose reduced density on $X,Y,Z,J$ is the given state $\rho_{X,Y,Z,J}$) such that
\[
I(T;Z) = I(X;Y|T) = I(T;J|X,Y) = I(T;J|X,Y,Z) = 0.
\]
{\color{black} Thus, Theorem \ref{Th:gen:classic} is applicable to the quantum bound presented in \cite{keykhosravi2014source} as well (see Remark \ref{rmks45}).
}
\end{remark}

The following lemma provides an equivalent condition for the assumption of Theorem \ref{Th:gen:classic} in the special case where \(Z\) is a function of \((X,Y)\).

\begin{lemma}\label{lem:con-hull}
Consider the special case \(Z = f(X,Y)\) for some arbitrary function \(f\). For a given joint distribution \(P_{X,Y}\), there exists a random variable \(T\) such that 
\[
I(Z; T | X,Y) = I(Z; T) = I(X; Y | T) = 0
\]
if and only if \(P_{X,Y}\) lies in the convex hull of the set 
\begin{align}
    \mathcal{A}(P_Z) = \left\{ Q_{X,Y} : I_Q(X;Y) = 0, \quad Q_Z = P_Z \right\}. \label{con:formula}
\end{align}
\end{lemma}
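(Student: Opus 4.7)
The plan is to prove both directions by identifying $T$ with a mixing variable in a convex decomposition. The key preliminary observation is that when $Z=f(X,Y)$ the condition $I(Z;T|X,Y)=0$ is automatic, since $H(Z|X,Y)=0$ forces $H(Z|X,Y,T)=0$. Thus the nontrivial content of the hypothesis reduces to the two constraints $I(Z;T)=0$ and $I(X;Y|T)=0$.

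For the ``only if'' direction, I would start from a joint distribution $P_{T,X,Y}$ satisfying the two nontrivial constraints. Marginalizing gives the decomposition
\[
P_{X,Y}(x,y)=\sum_{t}P_{T}(t)\,P_{X,Y|T}(x,y|t).
\]
The condition $I(X;Y|T)=0$ says that for every $t$ in the support of $P_T$ the conditional $Q^{(t)}_{X,Y}\triangleq P_{X,Y|T=t}$ is a product distribution, i.e.\ $I_{Q^{(t)}}(X;Y)=0$. Because $Z=f(X,Y)$, the marginal of $Q^{(t)}_{X,Y}$ on $Z$ coincides with $P_{Z|T=t}$, and the condition $I(Z;T)=0$ forces $P_{Z|T=t}=P_Z$ for every such $t$. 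Hence each $Q^{(t)}_{X,Y}$ lies in $\mathcal{A}(P_Z)$ and $P_{X,Y}$ is exhibited as the claimed convex combination.

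For the ``if'' direction, given a representation $P_{X,Y}=\sum_i \alpha_i Q^{(i)}_{X,Y}$ with $Q^{(i)}_{X,Y}\in \mathcal{A}(P_Z)$, define a random variable $T$ on the index set by setting $P_T(i)=\alpha_i$ and $P_{X,Y|T=i}=Q^{(i)}_{X,Y}$; then let $Z=f(X,Y)$ as usual. The three required vanishings follow immediately: $I(Z;T|X,Y)=0$ because $Z$ is a function of $(X,Y)$; $I(X;Y|T)=\sum_i \alpha_i I_{Q^{(i)}}(X;Y)=0$ by the defining property of $\mathcal{A}(P_Z)$; and $I(Z;T)=0$ because the $Z$-marginal of each $Q^{(i)}_{X,Y}$ equals $P_Z$, so $P_{Z|T=i}=P_Z$ for all $i$ with $\alpha_i>0$.

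I do not anticipate a real obstacle here; the lemma is essentially the standard ``a random variable whose conditional distributions live in a set $\mathcal S$ corresponds to a mixture of elements of $\mathcal S$'' principle, applied to the set $\mathcal A(P_Z)$. The only point worth being careful about is to notice up front that $I(Z;T|X,Y)=0$ comes for free from $Z=f(X,Y)$, so that the remaining two equalities exactly encode ``each slice is a product distribution with the correct $Z$-marginal.''
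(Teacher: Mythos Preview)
Your proposal is correct and follows essentially the same argument as the paper's own proof: both directions identify $T$ with the mixing index in a convex decomposition, using $I(X;Y|T)=0$ to force each slice to be a product and $I(Z;T)=0$ to force each slice to have $Z$-marginal $P_Z$. The paper also notes up front that $I(Z;T|X,Y)=0$ is automatic when $Z=f(X,Y)$, so there is no substantive difference.
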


The proof of this lemma can be found in Appendix \ref{append2}.

\subsection{$Z$ is a function of $(X,Y)$}\label{function}

In this subsection, we focus on the scenario where \(Z\) is a deterministic function of \(X\) and \(Y\), i.e., \(Z = f(X,Y)\) for some function \(f\). 

An immediate consequence of this assumption is that the upper bound \(\Psi(X;Y\|Z)\) in \eqref{main2-1} coincides with the simpler bound \(\hat{\Psi}(X;Y\|Z)\) in \eqref{main2}. This follows by choosing the auxiliary random variables \(U = (X,Y)\) and \(V = Z\), which reduces the max-min expression in \eqref{main2-1} directly to the form in \eqref{main2}.

\begin{lemma}
We have $\hat{\Psi}(X;Y\|Z) = {\Psi}(X;Y\|Z)$ if $Z$ is a function of $(X,Y)$.
\end{lemma}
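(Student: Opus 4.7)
The plan is to establish the two inequalities $\Psi(X;Y\|Z)\leq \hat{\Psi}(X;Y\|Z)$ and $\Psi(X;Y\|Z)\geq \hat{\Psi}(X;Y\|Z)$, noting that the first direction holds unconditionally by Proposition \ref{lem1}. So the work is all in the reverse direction, and the key observation is that when $Z$ is a function of $(X,Y)$, one can plug a specific choice of the auxiliaries $(U,V)$ into \eqref{main2-1} that recovers exactly the expression in \eqref{main2}.

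Concretely, for each candidate $P_{J|X,Y}$ (since $Z$ is a function of $(X,Y)$, conditioning on $Z$ is redundant, so $P_{J|X,Y,Z}$ reduces to $P_{J|X,Y}$), I would choose $U=(X,Y)$ and $V=Z$. Since both are deterministic functions of $(X,Y)$, the Markov chain $U,V\rightarrow X,Y\rightarrow J,Z$ holds trivially, so this is a feasible choice in the maximization defining $\Psi(X;Y\|Z)$. With this choice I get $I(U;Z|V)=I(X,Y;Z|Z)=0$ and $I(U;J|V)=I(X,Y;J|Z)$, so the objective in \eqref{main2-1} evaluates to
\[
I(X;Y|J)+I(X,Y;J|Z).
\]
Taking the max over $(U,V)$ can only make this larger, and then taking the inf over $J$ yields
\[
\Psi(X;Y\|Z)\geq \inf_{J}\bigl[I(X;Y|J)+I(X,Y;J|Z)\bigr]=\hat{\Psi}(X;Y\|Z).
\]

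Combined with $\Psi(X;Y\|Z)\leq \hat{\Psi}(X;Y\|Z)$ from Proposition \ref{lem1}, this gives the equality. There is no real obstacle here; the only thing to double-check is that the minimizations in the two quantities range over compatible sets of $J$, which is fine because when $Z=f(X,Y)$ the distribution $P_{J|X,Y,Z}$ appearing in $\hat{\Psi}$ is the same object as $P_{J|X,Y}$ appearing in the Markov chain $U,V\rightarrow X,Y\rightarrow J,Z$ of $\Psi$.
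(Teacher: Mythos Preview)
Your proposal is correct and is exactly the argument the paper uses: the paper states that the lemma is immediate from the choice $U=(X,Y)$ and $V=Z$ in \eqref{main2-1}, which reduces the expression to \eqref{main2}, combined with the inequality of Proposition~\ref{lem1}. Your additional remark about the compatibility of the $J$-minimizations (that $P_{J|X,Y,Z}$ and $P_{J|X,Y}$ coincide when $Z=f(X,Y)$) is a useful clarification that the paper leaves implicit.
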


 We will explore the value of $\hat{\Psi}(X;Y\|Z)$ under various scenarios based on the alphabet sizes of $X$ and $Y$.  
\subsubsection{$X$ and $Y$ are binary}
Assume that $X, Y \in \{0,1\}$ are binary random variables. Moreover, 
$$
P_{X,Y} = \begin{bmatrix}
   p_{00} & p_{01} \\
   p_{10} & p_{11}
\end{bmatrix}.
$$

Let $X \oplus Y$ denote the XOR of $X$ and $Y$, and $X \wedge Y$ denote the AND of $X$ and $Y$. 
We consider the three functions $X + Y$, $X \oplus Y$, and $X \wedge Y$ as follows:

$$
Z_1 = X + Y = \begin{cases}
        0 & (X,Y) = (0,0) \\
        1 & (X,Y) \in \{(0,1), (1,0)\} \\
        2 & (X,Y) = (1,1)
    \end{cases}
$$
and
$$
Z_2 = X \oplus Y = \begin{cases}
        1 & (X,Y) \in \{(0,0), (1,1)\} \\
        0 & (X,Y) \in \{(0,1), (1,0)\}
    \end{cases}
$$
and 
$$
Z_3 = X \wedge Y = \begin{cases}
        0 & (X,Y) \in \{(0,1), (1,0), (0,0)\} \\
        1 & (X,Y) = (1,1)
    \end{cases}.
$$
We have the following theorem.
\begin{theorem}\label{th:binary}
The following statements hold for binary random variables $X$ and $Y$:
\begin{itemize}
    \item For any $P_{X,Y}$, we have
    $\Psi(X;Y\|Z_2) = I(X;Y).$
    \item If $\operatorname{Cov}(X,Y) \leq 0$, then
    $\Psi(X;Y\|Z_1) = \Psi(X;Y\|Z_3) = I(X;Y).$  
    \item If $\operatorname{Cov}(X,Y) > 0$, then
    $\Psi(X;Y\|Z_1) = \Psi(X;Y\|Z_3) = 0.$
\end{itemize}
\end{theorem}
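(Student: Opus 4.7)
Since each of $Z_1,Z_2,Z_3$ is a function of $(X,Y)$, the preceding lemma gives $\Psi(X;Y\|Z_i)=\hat{\Psi}(X;Y\|Z_i)$, so it suffices to compute $\hat{\Psi}$. For the first two bullets I will invoke Theorem~\ref{Th:gen:classic} via Lemma~\ref{lem:con-hull}: $\hat{\Psi}(X;Y\|Z_i)=I(X;Y)$ once $P_{X,Y}$ is shown to lie in the convex hull of $\mathcal{A}(P_{Z_i})=\{Q_{X,Y}:I_Q(X;Y)=0,\,Q_{Z_i}=P_{Z_i}\}$. For the third bullet I will instead exhibit an auxiliary $J$ with $J\rightarrow Z_i\rightarrow(X,Y)$ and $X\perp Y\mid J$, which forces both summands in $\hat{\Psi}=\min_J[I(X;Y|J)+I(X,Y;J|Z_i)]$ to vanish.

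\textbf{Convex-hull verifications.} For $Z_2=X\oplus Y$ with $p=P(Z_2=1)$, I use the four product distributions obtained by making one of $X,Y$ deterministic: $E_1=(1-p,p,0,0)$, $E_2=(0,0,p,1-p)$, $E_3=(1-p,0,p,0)$, $E_4=(0,p,0,1-p)$. Writing $P_{X,Y}=\sum_i\lambda_iE_i$ with $\sum_i\lambda_i=1$ gives a one-parameter family of candidate coefficients, and a direct positivity check along this family produces a valid convex combination for every $P_{X,Y}$. For $Z_1=X+Y$ with $Cov(X,Y)\leq0$, the constraints $Q(Z_1=0)=p_{00}$ and $Q(Z_1=2)=p_{11}$ pin the marginals of any admissible product distribution to the two roots $(\alpha_*,\beta_*),(\beta_*,\alpha_*)$ of $x^2-(1+p_{11}-p_{00})x+p_{11}=0$, and the resulting segment of joint distributions has $p_{01}$ ranging over an interval that rearranges to $\{p_{01}:p_{00}p_{11}\leq p_{01}p_{10}\}=\{Cov\leq0\}$. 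For $Z_3=X\wedge Y$ with $Cov\leq0$, the admissible product distributions form the 1-parameter curve $\alpha\beta=p_{11}$; projecting onto $(q_{00},q_{01})$ yields a concave arc between $(0,1-p_{11})$ and $(0,0)$, and membership of $(p_{00},p_{01})$ in its convex hull reduces once again to $p_{00}p_{11}\leq p_{01}p_{10}$.

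\textbf{Construction of $J$ when $Cov>0$, and the main obstacle.} For $Z_3$ a binary $J$ suffices: set $J=1$ whenever $Z_3=1$, and on $\{Z_3=0\}$ mix between outputting $0$ and a ``reveal'' symbol chosen so that, conditionally on the reveal, $(X,Y)$ is the product distribution with the marginals of $P_{X,Y\mid Z_3=0}$; solving for the mixing weight gives $w=(p_{00}+p_{01})(p_{00}+p_{10})/p_{00}$, and $w\leq1$ simplifies exactly to $Cov\geq0$. For $Z_1$ the conditions $J\rightarrow Z_1\rightarrow(X,Y)$ together with $X\perp Y\mid J=j$ translate into the algebraic constraint $a_jc_j=\mu b_j^2$ on the conditional distribution $(a_j,b_j,c_j)=P(Z_1\mid J=j)$, with $\mu=p_{01}p_{10}/(p_{01}+p_{10})^2$; the existence of a valid $P_J$ is then equivalent to writing the target $(p_{00},p_{01}+p_{10},p_{11})$ as a convex combination of points on the 1-dimensional curve $\{ac=\mu b^2\}$ inside the 2-simplex. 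Identifying this convex hull as $\{a+c\leq1,\,ac\geq\mu(1-a-c)^2\}$ and substituting the target values reduces containment to $Cov\geq0$. I expect the main technical difficulty to be this last convex-hull identification for $Z_1$; once it is in place, the explicit construction of $J$ and the conclusion $\hat{\Psi}=\Psi=0$ follow by routine calculation.
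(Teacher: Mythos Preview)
Your plan is correct and each of the five case analyses can be pushed through. The paper's proof is organized differently and more economically, by exploiting a monotonicity you do not use: since $Z_2$ and $Z_3$ are both functions of $Z_1$, the inequality $I(U;g(Z)|V)\le I(U;Z|V)$ inside the definition of $\Psi$ gives $\Psi(X;Y\|Z_2)\ge\Psi(X;Y\|Z_1)$ and $\Psi(X;Y\|Z_3)\ge\Psi(X;Y\|Z_1)$ for free. This collapses the whole theorem to just two constructions: (a) $\Psi(\cdot\|Z_1)=I(X;Y)$ when $Cov(X,Y)\le0$, via a binary $T$ built from the roots of $x^2-(p_{01}+p_{10})x+p_{00}p_{11}$, which is your $Z_1$ argument in a different parametrization; and (b) $\Psi(\cdot\|Z_3)=0$ when $Cov(X,Y)>0$, via a binary $J^*$ with $J^*\to Z_3\to(X,Y)$ and $I(X;Y|J^*)=0$, which is essentially your $Z_3$ construction. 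The XOR statement then follows by combining the monotonicity over $Z_1$ (for $Cov\le0$) with its symmetric counterpart over the ``swapped'' sum $\hat Z_1$ that merges $\{(0,0),(1,1)\}$ instead of $\{(0,1),(1,0)\}$ (for $Cov\ge0$). Most importantly, the conic-section convex-hull identification you flag as the main obstacle---the $Z_1$, $Cov>0$ case---is bypassed entirely, since $0\le\Psi(\cdot\|Z_1)\le\Psi(\cdot\|Z_3)=0$.

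Your direct four-corner argument for $Z_2$ is a clean alternative to the paper's monotonicity-plus-symmetry route, and your separate convex-hull check for $Z_3$ with $Cov\le0$ also works. What the degradation ordering buys is not a new idea but a significant reduction in casework, removing precisely the case you anticipated would be hardest.
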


The proof of this theorem can be found in Appendix \ref{append2}.

\subsubsection{Ternary-Binary}
In this section, we demonstrate the applicability of Theorem \ref{Th:gen:classic} beyond the binary case by extending the alphabet of $Y$ to $\mathcal{Y} = \{0,1,2\}$.
\color{black}
We give a condition for $\Psi(X;Y\|Z) = I(X;Y)$ when $Z = (X + Y) \mod 2$. 

\begin{theorem}\label{th:ter-bin}
Assume that $\mathcal{X} = \{0,1\}$, $\mathcal{Y} = \{0,1,2\}$, and $\mathcal{Z} = \{0,1\}$, and
\begin{align}\label{Z:function}
   Z = (X + Y) \mod 2 = \begin{cases}
        0 & (X,Y) \in \{(0,0), (1,1), (0,2)\} \\
        1 & (X,Y) \in \{(0,1), (1,0), (1,2)\}
    \end{cases}
\end{align}
If 
$$
\frac{P_{X,Y}(0,0)}{P_Z(0)} + \frac{P_{X,Y}(1,2)}{P_Z(1)} \leq 1
$$
and
$$
\frac{P_{X,Y}(0,2)}{P_Z(0)} + \frac{P_{X,Y}(1,0)}{P_Z(1)} \leq 1,
$$
then
$$
\Psi(X;Y\|Z) = I(X;Y).
$$

\end{theorem}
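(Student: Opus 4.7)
The plan is to invoke Theorem \ref{Th:gen:classic} via Lemma \ref{lem:con-hull}: since $Z$ is a function of $(X,Y)$, the opening lemma of this subsection gives $\hat\Psi=\Psi$, the upper bound $\Psi\leq I(X;Y)$ is immediate (take $J$ constant in $\hat\Psi$), and Theorem \ref{Th:gen:classic} combined with Lemma \ref{lem:con-hull} will force the matching lower bound once I exhibit a convex decomposition of $P_{X,Y}$ into product distributions whose induced $Z$-marginal equals $P_Z$.

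First I would identify seven natural product distributions in $\mathcal{A}(P_Z)=\{Q_{X,Y}:I_Q(X;Y)=0,\ Q_Z=P_Z\}$, obtained by concentrating the $Y$-component of the product on two of the three $Y$-values and taking the $X$-component at one of the feasible values $\alpha_0\in\{0,\,1-q,\,q,\,1\}$ (with $q=P_Z(0)$). Call these $E_1,\dots,E_7$ with supports
\begin{align*}
&E_1\colon \{(0,0),(0,1)\},\ E_2\colon \{(0,2),(0,1)\},\ E_3\colon \{(1,0),(1,1)\},\ E_4\colon \{(1,2),(1,1)\},\\
&E_5\colon \{(0,0),(1,0)\},\ E_6\colon \{(0,2),(1,2)\},\ E_7\colon \{(0,1),(1,1)\}.
\end{align*}
Introducing the normalized quantities $u_{xy}=P_{X,Y}(x,y)/P_Z(z(x,y))$, which satisfy $u_{00}+u_{02}+u_{11}=u_{01}+u_{10}+u_{12}=1$, the identity $P_{X,Y}=\sum_i \lambda_i E_i$ becomes six linear equations in seven nonnegative unknowns, and I would exhibit the closed-form solution
\begin{align*}
&\lambda_1=(u_{00}-u_{10})_+,\ \lambda_3=(u_{10}-u_{00})_+,\ \lambda_2=(u_{02}-u_{12})_+,\ \lambda_4=(u_{12}-u_{02})_+,\\
&\lambda_5=\min(u_{00},u_{10}),\ \lambda_6=\min(u_{02},u_{12}),\ \lambda_7=u_{01}-(u_{00}-u_{10})_+-(u_{02}-u_{12})_+.
\end{align*}
Routine algebra then confirms both the six defining equations and $\sum_i\lambda_i=1$.

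The one nontrivial point, and the main obstacle, is verifying $\lambda_7\geq 0$, i.e.\ $(u_{00}-u_{10})_++(u_{02}-u_{12})_+\leq u_{01}$. I would handle this by a four-way sign split of $u_{00}-u_{10}$ and $u_{02}-u_{12}$: the both-non-positive case is immediate, and the both-positive case reduces to the identity $(u_{00}-u_{10})+(u_{02}-u_{12})=u_{01}-u_{11}\leq u_{01}$ obtained from the two sum-to-one relations. Each mixed case uses exactly one of the two hypotheses: the assumption $P_{X,Y}(0,0)/P_Z(0)+P_{X,Y}(1,2)/P_Z(1)\leq 1$ rewrites via $u_{12}=1-u_{01}-u_{10}$ as $u_{00}-u_{10}\leq u_{01}$, which covers the case $u_{00}>u_{10}$ and $u_{02}\leq u_{12}$; the other hypothesis symmetrically handles $u_{00}\leq u_{10}$ and $u_{02}>u_{12}$. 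With $\lambda_7\geq 0$ established, Lemma \ref{lem:con-hull} produces the auxiliary $T$ required by Theorem \ref{Th:gen:classic}, which then closes the argument.
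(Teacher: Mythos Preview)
Your argument is correct. The seven distributions $E_1,\dots,E_7$ are indeed product laws with the prescribed $Z$-marginal, the six linear equations for $P_{X,Y}=\sum_i\lambda_i E_i$ check out with your choice of $\lambda_i$ (using $(a-b)_+ +\min(a,b)=a$ and the identity $(u_{00}-u_{10})+(u_{02}-u_{12})=u_{01}-u_{11}$), and your four-case verification of $\lambda_7\ge 0$ is exactly where the two hypotheses enter. Lemma~\ref{lem:con-hull} and Theorem~\ref{Th:gen:classic} then finish the job as you indicate.

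The paper takes a different route to the same convex-hull membership. Rather than decomposing directly into product laws, it observes that any $P_{X,Y}$ supported on $Y\in\{0,1\}$ (respectively $Y\in\{1,2\}$) is a binary XOR instance, so by the construction already carried out in the proof of Theorem~\ref{th:binary} such a distribution lies in $\mathrm{conv}\,\mathcal A(P_Z)$. It then writes $P_{X,Y}=\omega B_{r,t}+(1-\omega)C_{s,\ell}$ as a mixture of one distribution of each type, with explicit parameters $r,t,s,\ell,\omega$ whose membership in $[0,1]$ is shown to follow from the two hypotheses. Your approach is more elementary and fully self-contained: it bypasses the appeal to Theorem~\ref{th:binary} and produces an explicit seven-valued auxiliary $T$ directly. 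The paper's approach is more modular, reusing the binary result as a black box and needing only a two-term mixture (at the cost of each term itself being a mixture rather than a product). Both arguments use the hypotheses in the same essential way, namely to guarantee nonnegativity of one coefficient in the decomposition.
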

Proof of this theorem can be found in Appendix \ref{append2}.

\begin{remark}
Note that when $P_Y(2) = 0$ (i.e., $Y$ is binary), $Z$ in \eqref{Z:function} will be the XOR of binary variables $X$ and $Y$. Therefore, Theorem \ref{th:ter-bin} generalizes the first part of Theorem \ref{th:binary}.
\color{black}
\end{remark}

\subsubsection{Arbitrary Functions}
{\color{black}As we observed earlier, when $Z$ is the XOR of $X$ and $Y$, we have $\Psi(X;Y\|Z) = I(X;Y)$ for every source $P_{X,Y}$. Equivalently, the set $\mathcal{P}$ (defined in the introduction) restricted to $Z=X\oplus Y$ is the entire set of \emph{all} joint distributions on $\mathcal{X}\times\mathcal{Y}$. Is there any function, other than XOR, with this property? } This part considers random variables $X, Y,$ and $Z$ with arbitrary alphabet sizes. We aim to determine all functions $f:\mathcal{X} \times \mathcal{Y} \mapsto \mathcal{Z}$ for which the equation $\Psi(X;Y\|Z) = I(X;Y)$ holds for every joint distribution $P_{X,Y}$.
\begin{theorem}\label{thm:thm5}
    Take some fixed alphabets $\mathcal{X}$ and $\mathcal{Y}$. Fix some function $f:\mathcal{X} \times \mathcal{Y} \mapsto \mathcal{Z}$ and assume that $Z = f(X,Y)$. We have
$$
\Psi(X;Y\|Z) = I(X;Y), \qquad \forall P_{X,Y}
$$
if and only if either of the following holds:
\begin{itemize}
    \item $X$ and $Y$ are binary and we have the XOR mapping $f(0,0) = f(1,1) \neq f(1,0) = f(0,1)$,
    \item $f(x,y)$ is a constant function.
\end{itemize}
\end{theorem}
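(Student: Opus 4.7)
The plan is to handle the two directions of the equivalence separately. The ``if'' direction combines a short direct argument for the constant case with the binary XOR case of Theorem~\ref{th:binary}. The ``only if'' direction proceeds by a combinatorial reduction that locates a $2\times 2$ ``bad sub-block'' of the matrix $[f(x,y)]_{x,y}$ and then exhibits an explicit $P_{X,Y}$ supported on the corresponding sub-alphabet with $\Psi(X;Y\|Z)<I(X;Y)$.

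For the ``if'' direction: if $f$ is constant then $Z$ is deterministic, so for any $J$ the chain
$I(X;Y|J)+I(X,Y;J|Z)=I(X;Y|J)+I(X,Y;J)\geq I(X;Y|J)+I(X;J)=I(X;Y,J)\geq I(X;Y)$
gives $\hat{\Psi}(X;Y\|Z)\geq I(X;Y)$; combined with \eqref{eqntu} and with $\Psi=\hat{\Psi}$ (valid since $Z$ is a function of $(X,Y)$), this yields $\Psi(X;Y\|Z)=I(X;Y)$. The binary XOR case is exactly the first bullet of Theorem~\ref{th:binary}.

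For the ``only if'' direction, call a $2\times 2$ sub-block of $f$ \emph{good} if it is constant or has the XOR shape $f(x_1,y_1)=f(x_2,y_2)\neq f(x_1,y_2)=f(x_2,y_1)$, and \emph{bad} otherwise. I would prove the combinatorial dichotomy: if every $2\times 2$ sub-block of $f$ is good, then either $f$ is globally constant, or $|\mathcal{X}|=|\mathcal{Y}|=2$ and $f$ is XOR. Fix two columns $y_1,y_2$ and examine the rows $r_x=(f(x,y_1),f(x,y_2))$. If some row equals $(a,a)$, then any row paired with it must also equal $(a,a)$ (the XOR alternative would force the other row to $(a,a)$ as well and collapse the block to constant), so $f$ is constant on these two columns. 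Otherwise every row has two distinct entries, and any pair of rows must be swaps of each other; when $|\mathcal{X}|\geq 3$ this forces two rows to coincide, and the resulting sub-block is a non-constant function of $y$ alone, which is bad. Running this over every column pair, and symmetrically over row pairs when $|\mathcal{Y}|\geq 3$, promotes the claim to a global dichotomy.

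Once a bad sub-block is located, I construct an explicit $P_{X,Y}$ supported within the sub-block so that $X,Y$ effectively become binary and $\Psi(X;Y\|Z)$ reduces to a computation on this $2\times 2$ block. Up to relabeling, the bad patterns are: (a) a non-constant function of a single variable, (b) an AND- or OR-type pattern, or (c) a pattern with three or four distinct $z$-values. In (a) and (c), choose $P_{X,Y}$ supported on a subset of the block on which $Z$ is injective (in (c), zero out one corner when the repeated $z$-value lies on a diagonal); then $I(X;Y|Z)=0$ while $I(X;Y)>0$, so $\hat{\Psi}\leq I(X;Y|Z)<I(X;Y)$. In (b), pick $P_{X,Y}$ with the appropriate positive covariance (viewing $X,Y$ as indicators of the two rows and columns) and invoke the second or third bullet of Theorem~\ref{th:binary} to conclude $\Psi=0<I(X;Y)$. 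The main obstacle I anticipate is the combinatorial dichotomy and the clean enumeration of bad patterns; once these are in hand, the analytic reductions are essentially routine, modulo the easy check that restricting the support of $P_{X,Y}$ does not alter the inf-max in \eqref{main2-1}, since contributions from $(x,y)$ of zero probability drop out.
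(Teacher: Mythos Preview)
Your proposal is correct and follows the same architecture as the paper: reduce the question to $2\times 2$ sub-blocks, use Theorem~\ref{th:binary} (together with the trivial injective case) to show that the only ``good'' $2\times 2$ patterns are constant and XOR, and then argue combinatorially that if every $2\times 2$ sub-block is good and the table is larger than $2\times 2$, the function must be globally constant.

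The one spot where the paper is slicker is the combinatorial dichotomy: rather than your row-pair case analysis, the paper simply notes that in \emph{both} good $2\times 2$ patterns the diagonal entries coincide, i.e.\ $f(x_1,y_1)=f(x_2,y_2)$ whenever $x_1\neq x_2$ and $y_1\neq y_2$; hence any ``diagonal move'' (change both coordinates) preserves $f$, and when $(|\mathcal X|,|\mathcal Y|)\neq(2,2)$ every pair of cells is connected by a chain of such moves (e.g.\ $(1,1)\to(2,2)\to(3,1)\to(1,2)$ in a $3\times 2$ table), forcing $f$ constant. Your argument works too, but this one-line connectivity observation replaces the entire row/column case split.
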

Proof of Theorem \ref{thm:thm5} can be found in Appendix \ref{append2}.
\begin{remark} 
Theorem~\ref{thm:thm5} shows that for every non-constant function other than the XOR function, there exists at least one distribution $P_{X,Y}$
such that $\Psi(X;Y\|Z) \neq I(X;Y)$.
\end{remark}
\color{black}




\section{On the gap between $S(X;Y\|Z)$, $\hat{\Psi}(X;Y\|Z)$ and $I(X;Y)$}\label{sec:discussionGapp}
Let us begin with the upper bound $\hat{\Psi}(X;Y\|Z)$ and $I(X;Y)$. It is clear that
$$
\hat{\Psi}(X;Y\|Z) \leq I(X;Y), \qquad \forall P_{X,Y,Z}.
$$
Assume that 
$$
\hat{\Psi}(X;Y\|Z) = I(X;Y)
$$
for some $P_{X,Y,Z}$. Then, we claim that if $X' = f_1(X)$, $Y' = f_2(Y)$ and $Z' = f_3(Z)$ for some functions $f_1, f_2, f_3$ such that $H(Z'|X',Y')=0$, then we also have
$$
\hat{\Psi}(X';Y'\|Z') = I(X';Y').
$$
We show this by establishing a more general result: 
\begin{theorem}\label{thm6n}
For any $P_{X,Y,Z,X',Y',Z'}$, we have the following inequality:
\begin{align}
\hat{\Psi}(X;Y\|Z) - \hat{\Psi}(X';Y'\|Z') \leq I(X,Y;Z'|Z) + I(Y',Z';X|X') + I(X',Z';Y|Y') + I(X;Y|X',Y',Z'). \label{eqndwe}
\end{align}
\end{theorem}
\begin{remark}
    Assume that $X' = f_1(X)$, $Y' = f_2(Y)$ and $Z' = f_3(Z)$ for some functions $f_1, f_2, f_3$ such that $H(Z'|X',Y')=0$. Then, \eqref{eqndwe} can be written as
    \begin{align}
    I(X;Y) - \hat{\Psi}(X;Y\|Z) \geq I(X';Y') - \hat{\Psi}(X';Y'\|Z'). 
    \end{align}
If we further have $\hat{\Psi}(X;Y\|Z) = I(X;Y)$, we deduce that
\begin{align}
   \hat{\Psi}(X';Y'\|Z') \geq I(X';Y'). 
\end{align}
On the other hand, we always have 
\begin{align}
   \hat{\Psi}(X';Y'\|Z') \leq I(X';Y'). 
\end{align}
Thus, $\hat{\Psi}(X';Y'\|Z') = I(X';Y')$.
\end{remark}
Proof of Theorem \ref{thm6n} is given in Appendix \ref{append2}.

\subsection{\color{black}{Remarks on} Theorem \ref{thm6n}, a Conjecture and Lower bounds on $S(X;Y\|Z)$}\label{sub:future:2}
{\color{black}
It is not known whether there exists some $P_{X,Y,Z}$ such that $S(X;Y\|Z) \neq \hat{\Psi}(X;Y\|Z)$. If $S(X;Y\|Z) = \hat{\Psi}(X;Y\|Z)$ holds for all $P_{X,Y,Z}$, we must have
the following for any $P_{X,Y,Z,X',Y',Z'}$:
\begin{align}
S(X;Y\|Z) - S(X';Y'\|Z') \leq I(X,Y;Z'|Z) + I(Y',Z';X|X') + I(X',Z';Y|Y') + I(X;Y|X',Y',Z'). \label{eqn20nw}
\end{align}
Observe that the above equation is valid when $X' = X, Y' = Y$ since from \eqref{eqnd1} we have
\begin{align}
S(X;Y\|Z) - S(X;Y\|Z') \leq & S_{\text{ow}}(X,Y \rightarrow Z' \| Z) \leq I(X,Y;Z'|Z).
\end{align}
We make the following conjecture:
\begin{conjecture}
    For any $P_{X,Y,Z,X',Y',Z'}$, we have
    $$
    S(X;Y\|Z) - S(X';Y'\|Z') \leq S_{\text{ow}}(X,Y \rightarrow Z' \| Z) + I(Y',Z';X|X') + I(X',Z';Y|Y') + I(X;Y|X',Y',Z').
    $$
    \label{conjecture:1}
\end{conjecture}
This conjecture generalizes \eqref{eqnd1} (it reduces to \eqref{eqnd1} by setting $X = X'$ and $Y = Y'$).
If one can refute the conjecture, it will serve as evidence of $S(X;Y\|Z) \neq \hat{\Psi}(X;Y\|Z)$ for general distributions. If one can show the conjecture, one deduces the \emph{lower bound}:
$$
S(X';Y'\|Z') \geq S(X;Y\|Z) - S_{\text{ow}}(X,Y \rightarrow Z' \| Z) - I(Y',Z';X|X') - I(X',Z';Y|Y') - I(X;Y|X',Y',Z').
$$
This lower bound can be used to show that if
$$
S(X;Y\|Z) = I(X;Y)
$$
for some $P_{X,Y,Z}$ and if $X' = f_1(X)$, $Y' = f_2(Y)$ and $Z' = f_3(Z)$ for some functions $f_1, f_2, f_3$ such that $H(Z'|X',Y') = 0$, then $S(X';Y'\|Z') \geq I(X';Y')$ and hence,
$$
S(X';Y'\|Z') = I(X';Y').
$$
This shows that establishing the conjecture may require novel achievability (lower bound) ideas.

In Appendix \ref{appendixNew}, we show that Conjecture \ref{conjecture:1} is equivalent to an apparently weaker form of the conjecture.
}

\color{black}
\section{Is the upper bound $I(X;Y)$ tight for the XOR case?}\label{dis:tightness}
Let $X$ and $Y$ be binary random variables with an arbitrary joint distribution, and set $Z = X \oplus Y$. We conclude from Theorem \ref{th:binary} that
$$
S(X;Y\|Z) \leq \Psi(X;Y\|Z) = \hat{\Psi}(X;Y\|Z) = I(X;Y).
$$
In other words, the upper bound $I(X;Y)$ cannot be improved using any existing upper bound. Note that $I(X;Y)$ is the maximum rate we could have achieved if Eve did not have access to the sequence $Z^n$ and could only monitor the public channel. We conjecture that the rate $I(X;Y)$ is no longer achievable (for a general $P_{X,Y}$) when Eve has access to the XOR of $X$ and $Y$. 

\begin{conjecture}\label{conj:2}
    There exists some distribution $P_{X,Y}$ on binary random variables $X$ and $Y$ such that
    $$
    S(X;Y\|Z) < I(X;Y)
    $$
    where $Z = X \oplus Y$.
\end{conjecture}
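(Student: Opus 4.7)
The plan is to follow the two-stage roadmap described by the authors in the discussion surrounding \eqref{eqnNN}, since by Theorem~\ref{th:binary} the best currently available upper bound $\Psi(X;Y\|Z)=\hat{\Psi}(X;Y\|Z)$ already collapses to $I(X;Y)$ for every XOR source, so any strict separation $S(X;Y\|Z)<I(X;Y)$ must come from a genuinely new upper bound. Concretely, I would (i) establish the functional $\Phi(X;Y\|Z)$ of \eqref{eqnNN} as a valid upper bound on $S(X;Y\|Z)$ by verifying the four sufficient conditions listed immediately afterwards, and then (ii) evaluate $\Phi$ on a carefully chosen asymmetric source $P_{X,Y}$ and show $\Phi(X;Y\|Z)<I(X;Y)$.

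For step (i), the fourth condition $\Phi\geq I(X;Y)-I(X;Z)$ is the classical one-way achievability lower bound embedded inside the infimum; it reduces, after pushing the Csisz\'ar--K\"orner inequality $S(X';Y'\|Z')\geq I(X';Y')-I(X';Z')$ through \eqref{eqnNN}, to an inequality of the form $I(X;Z)\geq I(X';Z')-\min(T_1,T_2)$ that should be provable coupling-by-coupling. The invariance under appending a common random variable and the monotonicity under local pre-processing should follow by transferring the corresponding properties of $S$ through the infimum by composing the inner coupling with the appropriate channel. The decisive obstacle is tensorization, $\Phi(X^n;Y^n\|Z^n)\leq n\,\Phi(X;Y\|Z)$ on i.i.d.\ sources: I would attempt product couplings built from the single-letter minimizer and exploit the additivity on products of every mutual-information term appearing in $T_1$ and $T_2$. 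The key difficulty is that the non-linear $\min(T_1,T_2)$ does not obviously tensorize; if this breaks the direct argument I would retreat to a weaker but still tensorizable variant in which $\min(T_1,T_2)$ is replaced by $T_1$ alone (or $T_2$ alone).

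For step (ii), the counterexample source must be asymmetric, because for any XOR source with $P(0,0)=P(1,1)$ and $P(0,1)=P(1,0)$ one has $H(Z)=H(Y|X)$, hence $I(X;Z)=0$, and even the one-way Csisz\'ar--K\"orner lower bound already attains $I(X;Y)$. I would therefore focus on sharply asymmetric distributions such as $P_{X,Y}(0,0)=1/2$, $P_{X,Y}(1,1)=1/4$, $P_{X,Y}(0,1)=P_{X,Y}(1,0)=1/8$, where $I(X;Z)>0$ and the gap $I(X;Y)-I(X;Z)$ is strictly smaller than $I(X;Y)$. Inside the infimum of \eqref{eqnNN} I would deliberately break the XOR relation $Z'=X'\oplus Y'$, for instance by coupling $(X',Y')=(X,Y)$ with $Z'$ chosen as a randomized function of $X$ alone or as a noisy common bit; this takes $(X',Y',Z')$ outside the XOR family and allows the intrinsic-information upper bound to push $S(X';Y'\|Z')$ strictly below $I(X';Y')=I(X;Y)$, so that provided $\min(T_1,T_2)$ remains small enough the resulting value of $\Phi$ drops below $I(X;Y)$.

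The hardest single step will be verifying the upper-bound property of $\Phi$, especially the tensorization, which the authors explicitly leave open. Should the $\Phi$-route stall, a complementary direct attack is to exploit the affine-coset structure of $(X^n,Y^n)$ conditioned on $Z^n$, namely that $(X^n,Y^n)\in\{(x^n,x^n\oplus z^n):x^n\in\{0,1\}^n\}$, so Alice's and Bob's strings are related by a publicly known shift; combined with the secrecy constraint \eqref{pd3} and a Fano-type step on \eqref{pd1}, this structural observation may yield a source-specific single-letter converse strictly below $I(X;Y)$ for sufficiently asymmetric $P_{X,Y}$, proving the conjecture without going through $\Phi$ at all.
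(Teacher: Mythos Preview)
This statement is a \emph{conjecture}: the paper does not prove it. What the paper does instead is sketch two possible attack routes and develop each partially. Your proposal follows the first route, the $\Phi$-functional of \eqref{eqnNN}, which the paper explicitly flags as open (``We leave this for future investigation''). The route the paper actually develops in more detail is the second one, via $S_{\Delta}$: Proposition~\ref{lmm3d} shows $H(Z^n|F_{1:r})$ cannot vanish, Theorem~\ref{tight:the1} gives the constrained upper bound $\Psi_{\Delta}$, and Theorem~\ref{sd=s} lower-bounds $H(Z^n|F_1,F_2)/n$ in the non-interactive setting. But the paper stops short of completing even this program, stating it is ``unable to perform this last step'' because the numerics are unreliable. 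So there is no proof in the paper to compare against; both routes remain open.

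Within your chosen $\Phi$-route, the gap you yourself name is genuine and unresolved: tensorization of $\Phi$ is the entire difficulty, and neither your product-coupling suggestion nor the retreat to $T_1$ alone is an argument. The paper gives no indication that $T_1$ alone tensorizes any more easily than $\min(T_1,T_2)$, and the additivity of the individual mutual-information terms in $T_1,T_2$ does not by itself yield subadditivity of the whole functional because $S(X';Y'\|Z')$ sits inside with unknown behaviour under products of couplings. Your step (ii) is also incomplete: to show $\Phi<I(X;Y)$ you need a coupling with $S(X';Y'\|Z')<I(X';Y')-\min(T_1,T_2)$, and controlling $S(X';Y'\|Z')$ by an intrinsic-information upper bound is fine in principle, but you have not exhibited any coupling where the resulting inequality actually holds numerically.

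Your observation that a counterexample must be asymmetric (since $P(0,0)=P(1,1)$, $P(0,1)=P(1,0)$ forces $I(X;Z)=0$ and hence $S=I(X;Y)$ by the one-way lower bound) is correct and worth keeping. The ``direct attack'' via the coset structure $Y^n=X^n\oplus Z^n$ is too vague to assess; this structure is already implicit in Proposition~\ref{lmm3d} and did not by itself yield a strict bound there.
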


One idea to develop a better upper bound is to note that for every $n$ we have
$$
S(X;Y\|Z) = \frac{1}{n} S(X^n; Y^n \| Z^n) \leq \frac{1}{n} \Psi(X^n; Y^n \| Z^n)
$$
where $(X^n, Y^n, Z^n)$ are $n$ i.i.d.\ repetitions of $(X, Y, Z)$. Therefore,
$$
S(X;Y\|Z) \leq \inf_n \left[ \frac{1}{n} \Psi(X^n; Y^n \| Z^n) \right].
$$

The above inequality might potentially lead to better upper bounds because $\Psi(X^n;Y^n\|Z^n)$ allows for minimizing over all $P_{J|X^n,Y^n,Z^n}$, which is a larger space than that of the single-letter bound. 
The following proposition shows that at least when $Z$ is a function of $(X,Y)$, we do not obtain better upper bounds with this approach, i.e., the proposition shows that
\begin{align}
\inf_n \left[\frac{1}{n} \Psi(X^n;Y^n\|Z^n)\right] = \Psi(X;Y\|Z) \label{eqnFW}
\end{align}
as $\Psi(X;Y\|Z) = \hat{\Psi}(X;Y\|Z)$ when $Z$ is a function of $(X,Y)$ (however, we do not know if \eqref{eqnFW} holds when $Z$ is not a function of $(X,Y)$).

\begin{proposition}\label{prop2}
    For every $n$, we have
    $$
    \hat{\Psi}(X;Y\|Z) = \frac{1}{n} \hat{\Psi}(X^n;Y^n\|Z^n)
    $$
where $(X^n, Y^n, Z^n)$ are $n$ i.i.d.\ repetitions of $(X, Y, Z)$.
\end{proposition}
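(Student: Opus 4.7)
My plan is to establish both directions of the claimed equality. The easy direction $\hat{\Psi}(X^n;Y^n\|Z^n)\le n\hat{\Psi}(X;Y\|Z)$ follows by taking a single-letter minimizer $P_{J^*|X,Y,Z}$ and using the product channel $\prod_{i=1}^{n}P_{J_i^*|X_i,Y_i,Z_i}$ to produce $J^n=(J_1^*,\ldots,J_n^*)$. Because $(X^n,Y^n,Z^n)$ is i.i.d.\ and each $J_i^*$ depends only on coordinate $i$, conditioning on $J^n$ preserves cross-coordinate independence, giving $I(X^n;Y^n|J^n)=\sum_i I(X_i;Y_i|J_i^*)=nI(X;Y|J^*)$ and $I(X^n,Y^n;J^n|Z^n)=nI(X,Y;J^*|Z)$.

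For the harder direction, I would fix an arbitrary $P_{J|X^n,Y^n,Z^n}$ and set $U_i\triangleq(J,X^{i-1},Y^{i-1})$ for each $i$. Since $(X_i,Y_i,Z_i)\sim P_{X,Y,Z}$, each $U_i$ is a valid auxiliary for the single-letter problem, so $I(X_i;Y_i|U_i)+I(X_i,Y_i;U_i|Z_i)\ge\hat{\Psi}(X;Y\|Z)$ for every $i$. The i.i.d.\ independence of $(X_i,Y_i)$ from $(X^{i-1},Y^{i-1})$ given $Z_i$ simplifies $I(X_i,Y_i;U_i|Z_i)=I(X_i,Y_i;J|Z_i,X^{i-1},Y^{i-1})$. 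Summing the single-letter lower bounds reduces the task to proving
\begin{align*}
\sum_{i=1}^{n}I(X_i;Y_i|J,X^{i-1},Y^{i-1}) &\le I(X^n;Y^n|J),\\
\sum_{i=1}^{n}I(X_i,Y_i;J|Z_i,X^{i-1},Y^{i-1}) &\le I(X^n,Y^n;J|Z^n).
\end{align*}

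The first inequality is immediate from the chain-rule expansion $I(X^n;Y^n|J)=\sum_i[I(X_i;Y^{i-1}|J,X^{i-1})+I(X_i;Y_i|J,X^{i-1},Y^{i-1})+I(X_i;Y_{i+1}^n|J,X^{i-1},Y^i)]$ after dropping the non-negative first and third summands. For the second, the chain rule gives $I(X^n,Y^n;J|Z^n)=\sum_i I(X_i,Y_i;J|Z^n,X^{i-1},Y^{i-1})$; comparing term-by-term through the identity $I(A;B|C)=I(A;B|C,D)+I(A;D|C)-I(A;D|B,C)$ with $D=Z_{\backslash i}$, together with the i.i.d.\ fact $I(X_i,Y_i;Z_{\backslash i}|Z_i,X^{i-1},Y^{i-1})=0$, yields $I(X_i,Y_i;J|Z_i,X^{i-1},Y^{i-1}) = I(X_i,Y_i;J|Z^n,X^{i-1},Y^{i-1}) - I(X_i,Y_i;Z_{\backslash i}|J,Z_i,X^{i-1},Y^{i-1})$, and the desired inequality follows since the subtracted term is non-negative. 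Combining everything gives $I(X^n;Y^n|J)+I(X^n,Y^n;J|Z^n)\ge n\hat{\Psi}(X;Y\|Z)$, and taking the infimum over $J$ closes the argument. The delicate step I anticipate is choosing the right auxiliary: putting the past $(X^{i-1},Y^{i-1})$ (and nothing involving $Z$) into $U_i$ is what makes both inequalities balance simultaneously, as alternatives like $U_i=(J,X^{i-1},Y^{i-1},Z_{i+1}^n)$ handle the second bound cleanly but then spoil the first.
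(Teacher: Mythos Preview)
Your proof is correct and follows essentially the same approach as the paper: both use the product auxiliary $J^n=(J_1^*,\ldots,J_n^*)$ for the easy direction and the same single-letterization $U_i=(J,X^{i-1},Y^{i-1})$ for the hard direction, relying on the same i.i.d.\ independence facts. The only cosmetic difference is that the paper writes the hard direction as a single exact identity $I(X^n;Y^n|J)+I(X^n,Y^n;J|Z^n)=\sum_i\big[I(X_i;Y_i|J_i)+I(X_iY_i;J_i|Z_i)\big]+\text{(non-negative terms)}$ and then drops the extras, whereas you split it into two separate chain-rule inequalities.
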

\begin{proof}
Take some arbitrary $P_{J|X^n,Y^n,Z^n}$ and let $J_i = (J, X^{i-1}, Y^{i-1})$. Then, we have
\begin{align}
I(X^n;Y^n|J) + I(X^n,Y^n; J | Z^n) &= \sum_{i} \bigl( I(X_i; Y_i | J_i) + I(X^{i-1}; Y_i | J,Y^{i-1}) + I(Y^{i-1}; X_i | J,X^{i-1}) \bigr) \nonumber \\
&\quad + \sum_{i} \bigl( I(X_i,Y_i; J_i | Z_i) + I(X_i, Y_i; Z^{i-1}, Z_{i+1}^n | J_i,Z_i) \bigr) \nonumber \\
&\geq \sum_i \bigl( I(X_i; Y_i | J_i) + I(X_i,Y_i; J_i | Z_i) \bigr) \nonumber \\
&\geq n \cdot \hat{\Psi}(X;Y\|Z), \label{reverse:1}
\end{align}
where the last inequality follows from the definition of $\hat{\Psi}(X;Y\|Z)$.

For the converse direction, assume that $J_{*}$ achieves the minimum of $I(X;Y|J) + I(X,Y;J|Z)$.
Suppose $J$ consists of i.i.d.\ copies of the random variable $J_{*}$, i.e., $J = J_{*}^{n}$. We have 
\begin{align*}
I(X^n; Y^n | J_{*}^n) + I(X^n Y^n; J_{*}^n | Z^n) = n \bigl( I(X; Y | J_{*}) + I(X,Y; J_{*} | Z) \bigr) = n \hat{\Psi}(X;Y\|Z).
\end{align*}
Thus, $\hat{\Psi}(X^n; Y^n \| Z^n) \leq n \hat{\Psi}(X;Y\|Z)$. This observation and inequality \eqref{reverse:1} complete the proof. 
\end{proof}

\subsection{Controlling $H(Z^n|F_{1:r})$}
The following lemma shows that $H(Z^n|F_{1:r})$ cannot be equal to zero for any code that achieves a positive key rate:
\begin{proposition}
    \label{lmm3d}
    Take some $P_{X,Y}(x,y)$ such that $P_{X,Y}(x,y) > 0$ for all $x,y$. 
Assume that $H(Z^n|F_{1:r}) = 0$,
where $F_{1:r} = (F_1, F_2, \cdots, F_r)$ is the public communication. Then, 
$$
H(X^n, Y^n | F_{1:r}) = 0.
$$
\end{proposition}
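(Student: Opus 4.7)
The plan is to exploit the classical rectangle property of interactive communication, combined with the full-support assumption $P_{X,Y}(x,y)>0$, to conclude that conditioning on any realization of the transcript forces $(X^n,Y^n)$ to be deterministic. Recall that in this section $Z=X\oplus Y$, so XOR is coordinate-wise invertible in either argument.

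First, the interactive protocol specifies kernels $K_i^{\mathrm{A}}(f_i\mid x^n,f_{1:i-1})$ for the rounds in which Alice speaks and $K_i^{\mathrm{B}}(f_i\mid y^n,f_{1:i-1})$ for those in which Bob speaks. Multiplying these kernels along the rounds gives the factorization
\begin{align*}
P(F_{1:r}=f_{1:r}\mid X^n=x^n,Y^n=y^n)=\alpha_{f_{1:r}}(x^n)\,\beta_{f_{1:r}}(y^n),
\end{align*}
where $\alpha_{f_{1:r}}(x^n)=\prod_{i\in\mathcal{A}}K_i^{\mathrm{A}}(f_i\mid x^n,f_{1:i-1})$ and $\beta_{f_{1:r}}(y^n)=\prod_{i\in\mathcal{B}}K_i^{\mathrm{B}}(f_i\mid y^n,f_{1:i-1})$, with $\mathcal{A},\mathcal{B}$ the indices of Alice's and Bob's rounds. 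Since the hypothesis $P_{X,Y}(x,y)>0$ yields $P_{X^n,Y^n}(x^n,y^n)>0$ for every pair, the posterior $P(X^n=x^n,Y^n=y^n\mid F_{1:r}=f_{1:r})$ is proportional to $\alpha_{f_{1:r}}(x^n)\,\beta_{f_{1:r}}(y^n)$, whose support is the combinatorial rectangle $A_{f_{1:r}}\times B_{f_{1:r}}$ with $A_{f_{1:r}}=\{x^n:\alpha_{f_{1:r}}(x^n)>0\}$ and $B_{f_{1:r}}=\{y^n:\beta_{f_{1:r}}(y^n)>0\}$.

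Next, I would invoke the assumption $H(Z^n\mid F_{1:r})=0$: on $\{F_{1:r}=f_{1:r}\}$ the value of $Z^n$ equals some deterministic $z^n(f_{1:r})$. Therefore every $(x^n,y^n)\in A_{f_{1:r}}\times B_{f_{1:r}}$ satisfies $x^n\oplus y^n=z^n(f_{1:r})$. If $A_{f_{1:r}}$ contained two distinct vectors $x^n\neq x'^n$, then fixing any $y^n\in B_{f_{1:r}}$ would force $x^n\oplus y^n=x'^n\oplus y^n$, whence $x^n=x'^n$, a contradiction; symmetrically $|B_{f_{1:r}}|=1$. Hence $(X^n,Y^n)$ is determined by $F_{1:r}$, giving $H(X^n,Y^n\mid F_{1:r})=0$.

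The main obstacle is the first step, namely the clean derivation of the rectangle factorization for randomized interactive protocols. This is classical and follows from the independence of the players' private randomness together with the one-sided dependence of each round's kernel on $(x^n,y^n)$, so after establishing it, the full-support hypothesis plus the coordinate-wise invertibility of XOR reduces the remainder to a short combinatorial check.
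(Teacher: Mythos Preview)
Your proposal is correct and follows essentially the same route as the paper: invoke the rectangle property of interactive communication to get a product-form posterior, use the full-support hypothesis so that the posterior's support is a genuine combinatorial rectangle, and then use the coordinate-wise injectivity of XOR to force each side of the rectangle to be a singleton. One small slip: the posterior is proportional to $P_{X^nY^n}(x^n,y^n)\,\alpha_{f_{1:r}}(x^n)\,\beta_{f_{1:r}}(y^n)$, not to $\alpha\beta$ alone, but since $P_{X^nY^n}>0$ everywhere this does not affect your support claim or the rest of the argument.
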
 

We need the following definition:
\begin{definition}\cite{ishwar,verdu} \label{defpreceq}
Given two pmfs $P_{X,Y}$ and $Q_{X,Y}$ on the alphabets $\mathcal{X} \times \mathcal{Y}$, the relation $Q_{X,Y} \preceq P_{X,Y}$ represents the existence of some functions $a: \mathcal{X} \mapsto \mathbb{R}_{+} \cup \{0\}$ and $b: \mathcal{Y} \mapsto \mathbb{R}_{+} \cup \{0\}$ such that for all $(x,y) \in \mathcal{X} \times \mathcal{Y}$ we have
\begin{align}
Q_{X,Y}(x,y) = a(x) b(y) P_{X,Y}(x,y). \label{eq:defforPart1proof}
\end{align}
\end{definition} 

\begin{proof}[Proof of Proposition \ref{lmm3d}]
Let us fix $F_{1:r} = f_{1:r}$. 
It is known that for any interactive communication,\footnote{This property is similar to the rectangle property of communication
complexity.} we have
$$
P_{X^n Y^n | F_{1:r} = f_{1:r}} \preceq P_{X^n Y^n}.
$$
In other words, 
$$
P(x^n, y^n | F_{1:r} = f_{1:r}) = P(x^n, y^n) a(x^n) b(y^n)
$$
for some functions $a(\cdot)$ and $b(\cdot)$.

Since $H(Z^n|F_{1:r} = f_{1:r}) = 0$, the value of $z^n$ must be {\color{black}fixed} when we fix $F_{1:r} = f_{1:r}$. This shows that for any $x_1^n, y_1^n$ and $x_2^n, y_2^n$ such that
$x_1^n + y_1^n \neq x_2^n + y_2^n$, the probabilities $p(x_1^n, y_1^n | F_{1:r} = f_{1:r})$ and $p(x_2^n, y_2^n | F_{1:r} = f_{1:r})$
cannot be positive at the same time. Thus, 
\[
0 = P(x_1^n, y_1^n | F_{1:r} = f_{1:r}) P(x_2^n, y_2^n | F_{1:r} = f_{1:r}) = P(x_1^n, y_1^n) P(x_2^n, y_2^n) a(x_1^n) b(y_1^n) a(x_2^n) b(y_2^n).
\]
Since $P_{X,Y}(x,y) > 0$ for all $x,y$, we must have
\[
a(x_1^n) b(y_1^n) a(x_2^n) b(y_2^n) = 0.
\]

Let $y^n$ be such that $b(y^n) > 0$. Take two distinct sequences $x_1^n, x_2^n$ and set $y_1^n = y_2^n = y^n$. This choice is valid since $x_1^n + y_1^n \neq x_2^n + y_2^n$. Thus, we get
\[
a(x_1^n) a(x_2^n) {\color{black} b^2(y^n)} = 0.
\]
Since we assumed $b(y^n) > 0$, we get that either $a(x_1^n)$ or $a(x_2^n)$ must be zero. Therefore, $a(x^n) = 0$ for all but one sequence $x^n$, and similarly we can show that $b(y^n) = 0$ for all but one sequence $y^n$. Thus, 
\[
P(x^n, y^n | F_{1:r} = f_{1:r}) = P(x^n, y^n) a(x^n) b(y^n)
\]
is positive for only one pair $(x^n, y^n)$. Thus,
\[
H(X^n, Y^n | F_{1:r} = f_{1:r}) = 0.
\]
\color{black}

\end{proof}

Proposition \ref{lmm3d} motivates the following definition: 
\begin{definition}
    $S_{\Delta}(X;Y\|Z)$ is the supremum of secret key rates corresponding to protocols
    satisfying  
    \[
    \lim_{n \rightarrow \infty} \frac{1}{n} H(Z^n | F_1, F_2, \cdots, F_r) \geq \Delta.
    \]
    The non-interactive capacity $S_{\text{ni},\Delta}(X;Y\|Z)$ is defined similarly.
\end{definition}

\begin{remark}
Observe that $\Delta \mapsto S_{\Delta}(X;Y\|Z)$ is a non-increasing function.
Moreover,
$$
S(X;Y\|Z) = \lim_{\Delta \rightarrow 0} S_{\Delta}(X;Y\|Z).
$$
\end{remark}

Our strategy for improving the upper bound on $S(X;Y\|Z)$ is as follows:
\begin{enumerate}
    \item Compute an upper bound for $S_{\Delta}(X;Y\|Z)$ for any given $\Delta > 0$. We will provide such an upper bound in Theorem \ref{tight:the1} below.
    \item Show that $S(X;Y\|Z) = S_{\Delta^*}(X;Y\|Z)$ for some $\Delta^* > 0$. Intuitively, we expect this to be true because of Proposition \ref{lmm3d}. We only show this part for the non-interactive setup for the XOR example.
    \item Use the upper bound on $S_{\Delta^*}(X;Y\|Z)$ to obtain an upper bound on $S(X;Y\|Z)$. We are unable to perform this last step for the non-interactive setting because, even though the upper bound on $S_{\Delta^*}(X;Y\|Z)$ is theoretically computable, unfortunately, the number of free variables is large and it is not easy to obtain reliable numerical results.
\end{enumerate}

We begin with task 1:

\begin{theorem}\label{tight:the1}
    $S_{\Delta}(X;Y\|Z)$ has an upper bound as follows:
    \begin{align}
       S_{\Delta}(X;Y\|Z) \leq \Psi_{\Delta}(X;Y\|Z) \triangleq \inf_J \max_{(U,V) \mkv (X,Y) \mkv (J,Z)} I(X;Y|J) + I(U;J|V) - I(U;Z|V), \label{main:delta1}
    \end{align}
    where $P_{X,Y,Z,J,U,V}$ satisfies the following constraint:
    \begin{align}
      I(V;Z) \leq I(V;J) + H(Z) - \Delta, \label{cons1}
    \end{align}
    and in the inner maximum, $U$ and $V$ satisfy the cardinality bounds 
    $|\mathcal{V}| \leq |\mathcal{X}||\mathcal{Y}| + 1$ and $|\mathcal{U}| \leq |\mathcal{X}||\mathcal{Y}|$.
    
    Moreover, if $Z = f(X,Y)$, in computing the upper bound we can assume that $H(X,Y | U,V,Z) = 0$.
\end{theorem}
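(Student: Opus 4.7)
The plan is to follow the derivation of the unconstrained bound $\Psi(X;Y\|Z)$ from \cite{gohari-terminal1} while extracting an additional single-letter constraint that encodes the hypothesis $\frac{1}{n}H(Z^n|F_{1:r})\geq \Delta - o(1)$.

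First I would fix an $(n,\delta)$ protocol achieving rate $R$ and satisfying the $\Delta$ constraint. Fano's inequality together with the secrecy requirement \eqref{pd3} gives a chain beginning with $nR \leq I(K_A; Y^n | F_{1:r}) - I(K_A; Z^n, F_{1:r}) + n\epsilon_n$. I would then introduce a virtual auxiliary sequence $J^n$ drawn i.i.d.\ from $P_{X,Y}P_{J|X,Y}$ and follow the derivation in \cite{gohari-terminal1} to reach the single-letter expression
\begin{align*}
R \leq I(X;Y|J) + I(U;J|V) - I(U;Z|V) + \epsilon_n,
\end{align*}
with an identification along the lines of $V_i = (F_{1:r}, J^{i-1}, Z^{i-1})$ and $U_i$ built from the key plus local context, finished off with time-sharing through a uniform index $Q \in [n]$ absorbed into $V$.

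The next step is to single-letterize the $\Delta$ hypothesis. The bound $\frac{1}{n}I(F_{1:r}; Z^n) \leq H(Z)-\Delta + o(1)$, combined with the nonnegativity of $\frac{1}{n}I(F_{1:r};J^n)$, gives $\frac{1}{n}[I(F_{1:r};Z^n) - I(F_{1:r};J^n)] \leq H(Z) - \Delta + o(1)$. Expanding both mutual informations by the chain rule using the same conditioning as in the $V_i$-identification above and time-sharing then yields $I(V;Z) - I(V;J) \leq H(Z) - \Delta$, which is the claimed constraint \eqref{cons1}. Letting $n\to\infty$ delivers the upper bound $\Psi_\Delta$. For the cardinality bounds, the Fenchel-Eggleston-Carath\'eodory theorem applies: preserving $P_{X,Y}$ costs $|\mathcal{X}||\mathcal{Y}|-1$ scalar constraints, preserving the value of the objective costs one more, and preserving the new constraint $I(V;Z)-I(V;J) \leq H(Z)-\Delta$ costs one extra, giving $|\mathcal{V}|\leq |\mathcal{X}||\mathcal{Y}|+1$; the bound for $U$ is the usual $|\mathcal{U}|\leq |\mathcal{X}||\mathcal{Y}|$ since only the inner objective depends on $P_{U|X,Y,V}$.

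For the \emph{moreover} clause, when $Z=f(X,Y)$ I would argue that an optimal $U$ can always be enlarged to $U' = (U,X,Y)$ without violating the Markov chain $U' \to (X,Y) \to (J,Z)$ or the constraint \eqref{cons1}; this makes $H(X,Y|U',V,Z)=0$ automatic. The remaining obstacle, and the hardest step of the whole plan, is twofold: (i) establishing that this enlargement does not decrease $I(U;J|V)-I(U;Z|V)$ at the optimum, which is not immediate because $I(X,Y;J|V,U) - I(X,Y;Z|V,U)$ has no fixed sign and may require exploiting the outer infimum over $J$; and (ii) the single-letter extraction in the third paragraph, where matching the expansion of $I(F_{1:r};Z^n)-I(F_{1:r};J^n)$ to the chosen $V_i$ identification forces either a Csisz\'ar-sum symmetrization or a different choice of $V_i$ than in the unconstrained proof.
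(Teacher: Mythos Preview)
Your overall plan matches the paper's proof: fix $P_{J|X,Y}$, couple $J^n$ i.i.d.\ with the source, run the standard chain starting from $H(K_A)$ through Fano and the Csisz\'ar--K\"orner identity, and then single-letterize the $\Delta$ hypothesis into the extra constraint on $V$. Two points deserve comment.

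First, the identification $V_i=(F_{1:r},J^{i-1},Z^{i-1})$ will not allow the telescoping you need; the paper takes $V_i=(F_{1:r},J^{i-1},Z_{i+1}^n)$, i.e.\ with the $J$ and $Z$ indices running in \emph{opposite} directions, which is exactly what makes the Csisz\'ar sum identity fire both in the main chain and in the constraint derivation. For the constraint the paper does not compare $I(F_{1:r};Z^n)-I(F_{1:r};J^n)$ as you propose; instead it expands $nI(V;Z)=\sum_i I(V_i;Z_i)$ directly, splits off $\sum_i I(F_{1:r};Z_i|Z_{i+1}^n)=I(F_{1:r};Z^n)$, applies the Csisz\'ar sum to the remaining $\sum_i I(J^{i-1};Z_i|Z_{i+1}^n,F_{1:r})=\sum_i I(Z_{i+1}^n;J_i|J^{i-1},F_{1:r})$, and upper-bounds the latter by $\sum_i I(V_i;J_i)=nI(V;J)$. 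This gives $I(V;Z)\leq I(V;J)+\frac1n I(F_{1:r};Z^n)\leq I(V;J)+H(Z)-\Delta$, and your obstacle (ii) dissolves.

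Second, and this is the genuine gap: for the \emph{moreover} clause your idea of enlarging $U$ to $(U,X,Y)$ does not work, precisely for the reason you identify---$I(X,Y;J|U,V)-I(X,Y;Z|U,V)$ has no sign---and the outer infimum over $J$ does not rescue it. The paper instead uses the functional representation lemma: given any $P_{U|V,X,Y}$, find $W$ with $I(W;U,V,Z)=0$ and $H(X,Y|U,V,Z,W)=0$, and set $U'=(U,W)$. Then $I(W;Z|U,V)=0$ automatically while $I(W;J|U,V)\geq 0$, so the objective can only increase. Since the constraint \eqref{cons1} involves only $V$, it is untouched. This is the missing idea that resolves your obstacle (i) without any appeal to the infimum over $J$.
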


\begin{remark}\label{re:2} 
Observe that $\Delta \mapsto \Psi_{\Delta}(X;Y\|Z)$ is a non-increasing function. When $\Delta = 0$, $\Psi_{\Delta}(X;Y\|Z)$ reduces to $\Psi(X;Y\|Z)$ as the only difference between the optimization problems in \eqref{main1} and \eqref{main:delta1} is the additional constraint defined in \eqref{cons1}. This constraint holds trivially when $\Delta = 0$. This constraint imposes a limitation on the distribution $P_{U,V|X,Y}$. 

As observed earlier, in computing $\Psi(X;Y\|Z)$ when $Z$ is a function of $(X,Y)$, an optimizer for 
$$
\max_{(U,V) \mkv (X,Y) \mkv (J,Z)} I(U;J|V) - I(U;Z|V)
$$
is $U = (X,Y)$ and $V = Z$, which yields the value $I(X,Y;J|Z)$. However, for a positive 
$\Delta$, this choice of $(U,V)$ may not be feasible in \eqref{cons1}.
{\color{black}
The constraint 
\begin{align*}
    I(V;Z) \leq I(V;J) + H(Z) - \Delta
\end{align*}
for the choice of $V=Z$ reduces to
\begin{align*}
    \Delta \leq I(Z;J).
\end{align*}
In particular, when $J$ is nearly a constant random variable, $I(Z;J) \approx 0$, which implies that $\Delta$ must be close to zero for $V = Z$ to be admissible. The second task is to prevent $\Delta$ from being close to zero.

}

\end{remark}

Proof of Theorem \ref{tight:the1} is given in Appendix \ref{append2}.

We now turn to the second task. The following theorem provides a lower bound on $H(Z^n | F_1, F_2)$ for a capacity-achieving code. Some of the manipulations inside the proof are similar to those given in \cite{nair2020} in the context of a different distributed source coding problem (the modulo-two sum problem of Körner and Marton).
\begin{theorem}\label{sd=s}
Suppose that $X$ and $Y$ are binary random variables, and let 
$Z = X \oplus Y$. Define
\begin{align}
  \Delta_1 &= \min_{U \mkv X \mkv (Y,Z)} \bigl[ H(Z|U) - H(X|U) \bigr],~~
  \Delta_2 = \min_{V \mkv Y \mkv (X,Z)} \bigl[ H(Z|V) - H(X|V) \bigr], \\
  \bar{\Delta} &= \max\{\Delta_1, \Delta_2\}.
\end{align}
Then, every non-interactive code with public messages $F_1$ and $F_2$ satisfies
\begin{align}
  \frac{1}{n} H(Z^n | F_1, F_2) \geq \frac{\bar{\Delta}}{2} + \frac{I(K_A; K_B | F_1, F_2)}{2n}.
\end{align}
\end{theorem}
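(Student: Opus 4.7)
The plan is to prove, for each $j \in \{1, 2\}$,
\[
2H(Z^n \mid F_1, F_2) \;\geq\; n\Delta_j + I(K_A; K_B \mid F_1, F_2),
\]
and then take the larger of the two to obtain $\bar{\Delta}$ in the statement. For the $j = 1$ bound I would single-letterize with the auxiliaries $U_i = (F_1, X^{i-1})$. Because the source is i.i.d.\ and $F_1$ is a function of $X^n$, the Markov chain $U_i \to X_i \to (Y_i, Z_i)$ holds, so by the definition of $\Delta_1$ one has $H(Z_i \mid U_i) - H(X_i \mid U_i) \geq \Delta_1$ for every $i$. Summing over $i$ and noting that $H(Z_i \mid F_1, X^{i-1}) = H(Z_i \mid F_1, X^{i-1}, Z^{i-1}) \leq H(Z_i \mid F_1, Z^{i-1})$ (because $Y^{i-1}$ is independent of $Z_i$ given $(F_1, X^{i-1})$) gives the single-letter bound $H(Z^n \mid F_1) \geq H(X^n \mid F_1) + n \Delta_1$.

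Next I would lift this from $F_1$-conditioning to $(F_1, F_2)$-conditioning. Using $H(Z^n \mid F_1) = H(Z^n \mid F_1, F_2) + I(Z^n; F_2 \mid F_1)$ and the analogous identity with $X^n$, together with the fact that $F_2$ is a function of $Y^n$ (so $I(X^n; F_2 \mid F_1) - I(Z^n; F_2 \mid F_1) = H(F_2 \mid F_1, Z^n) - H(F_2 \mid X^n)$), I would obtain
\[
H(Z^n \mid F_1, F_2) \;\geq\; H(X^n \mid F_1, F_2) + n\Delta_1 + H(F_2 \mid F_1, Z^n) - H(F_2 \mid X^n),
\]
and, by the mirror-image argument with $V_i = (F_2, Y^{i-1})$ and $\Delta_2$,
\[
H(Z^n \mid F_1, F_2) \;\geq\; H(Y^n \mid F_1, F_2) + n\Delta_2 + H(F_1 \mid F_2, Z^n) - H(F_1 \mid Y^n).
\]
Adding the two displays and using $H(X^n \mid F_1, F_2) + H(Y^n \mid F_1, F_2) \geq H(Z^n \mid F_1, F_2) + I(K_A; K_B \mid F_1, F_2)$, which follows from $H(X^n, Y^n \mid F_1, F_2) \geq H(Z^n \mid F_1, F_2)$ and the data-processing bound $I(X^n; Y^n \mid F_1, F_2) \geq I(K_A; K_B \mid F_1, F_2)$, yields
\[
H(Z^n \mid F_1, F_2) \;\geq\; I(K_A; K_B \mid F_1, F_2) + n(\Delta_1 + \Delta_2) + R,
\]
where $R$ collects the four cross-conditional entropy corrections.

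\textbf{Main obstacle.} The hard part is handling $R$ and reorganizing the inequality so that the final coefficients of $\bar{\Delta}$ and of $I(K_A;K_B\mid F_1, F_2)$ are each $1/2$, rather than the $1$ that the crude combination above gives. This is exactly where the Csiszár-sum-identity manipulation from \cite{nair2020} enters: one introduces a time-sharing index and applies the sum identity between forward-in-$i$ conditionings on $X^{i-1}$ and backward-in-$i$ conditionings on $Y_{i+1}^n$ under the rectangle-structured conditional law $P_{X^n, Y^n \mid F_1, F_2}$, telescoping so that the terms making up $R$ cancel into non-negative pieces. Once $R$ has been absorbed in this way, averaging the $\Delta_1$- and $\Delta_2$-flavoured bounds produces the factor $1/2$ in front of $\bar{\Delta}$ and of $I(K_A; K_B\mid F_1, F_2)/n$ in the statement.
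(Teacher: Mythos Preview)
Your opening plan---prove $2H(Z^n\mid F_1,F_2)\ge n\Delta_j + I(K_A;K_B\mid F_1,F_2)$ for each $j$ separately and then take the maximum---is exactly right and is what the paper does. But the body of your proposal abandons this plan: you add the $\Delta_1$- and $\Delta_2$-flavoured displays and only afterwards try to ``average'' to recover the factor~$1/2$. That cannot work. Adding gives, at best, $(\Delta_1+\Delta_2)$ on the right, and since both $\Delta_j\le 0$ (set $U=X$, respectively $V=Y$, to get zero), the average $(\Delta_1+\Delta_2)/2$ is \emph{no larger} than $\bar\Delta/2=\max(\Delta_1,\Delta_2)/2$. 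So even if your residual $R$ were nonnegative---which is not clear, as $H(F_2\mid F_1,Z^n)-H(F_2\mid X^n)$ has no obvious sign---you would end up with a weaker inequality than the one stated. The Csisz\'ar-sum telescoping you invoke is the right tool, but it must be deployed \emph{inside} a single-$\Delta_j$ derivation, not as a post-hoc fix to $R$.

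Concretely, the paper obtains the $\Delta_1$ bound on its own by lower-bounding the combination
\[
I(F_1,F_2;X^n,Y^n)\;+\;2\,H(Z^n\mid F_1,F_2).
\]
One copy of $H(Z^n\mid F_1,F_2)$ merges with $I(F_2;Z^n\mid F_1)$ to form $H(Z^n\mid F_1)$; the second copy, via the Markov chain $F_2\!\to\! Y^n\!\to\! (X^n,F_1)$ (so that $H(Z^n\mid F_1,F_2,Y^n)=H(Z^n\mid F_1,Y^n)$), helps rebuild $H(X^n,Y^n)$ from $I(F_1;X^n)+H(Y^n\mid F_1)+H(Z^n\mid F_1,Y^n)$. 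What survives is $nH(X,Y)+\bigl[H(Z^n\mid F_1)-H(Y^n\mid F_1)\bigr]$ plus nonnegative terms, and the Csisz\'ar sum identity with the auxiliary $U_i=(F_1,\,Y^{i-1},\,Z_{i+1}^{\,n})$---not your $(F_1,X^{i-1})$---single-letterizes the bracket into $\sum_i\bigl[H(Z_i\mid U_i)-H(Y_i\mid U_i)\bigr]\ge n\Delta_1$. Combining with $I(K_A;K_B\mid F_1,F_2)\le nH(X,Y)-I(F_1,F_2;X^n,Y^n)$ then yields the stated bound with $\Delta_1$ and the factor~$1/2$ already in place, and no leftover $R$. (The single-letter quantity that falls out is $H(Z\mid U)-H(Y\mid U)$; the $H(X\mid U)$ in the statement of $\Delta_1$ is evidently a typo.)
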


\begin{corollary} 
Let 
\[
\Delta = \frac{\bar{\Delta} + S_{\mathrm{ni}}(X;Y\|Z)}{2}.
\]
Then,
\begin{align}
  S_{\mathrm{ni}}(X;Y\|Z) = S_{\mathrm{ni}, \Delta}(X;Y\|Z).
\end{align}
\end{corollary}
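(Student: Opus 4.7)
The plan is to establish $S_{\text{ni}}(X;Y\|Z) = S_{\text{ni},\Delta}(X;Y\|Z)$ by verifying both inequalities. The direction $S_{\text{ni},\Delta}(X;Y\|Z) \leq S_{\text{ni}}(X;Y\|Z)$ is immediate from the definition, since imposing the entropy constraint $\lim_n \frac{1}{n}H(Z^n|F_1,F_2)\geq \Delta$ only shrinks the set of admissible protocols. The substantive content is the reverse inequality: for the specific choice $\Delta=(\bar\Delta + S_{\text{ni}}(X;Y\|Z))/2$, any capacity-approaching non-interactive protocol automatically satisfies the entropy constraint, and this is exactly what Theorem \ref{sd=s} is designed to deliver.

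Concretely, I would fix a sequence of $(n,\delta_n)$ non-interactive codes with key rates $R_n \to S_{\text{ni}}(X;Y\|Z)$ and error parameters $\delta_n \to 0$, obtained from the definition of $S_{\text{ni}}$ by a standard diagonalization. The first analytical step is to show that along this sequence
\[
\frac{1}{n}I(K_A;K_B|F_1,F_2)\longrightarrow S_{\text{ni}}(X;Y\|Z).
\]
Decomposing $I(K_A;K_B|F_1,F_2)=H(K_A|F_1,F_2)-H(K_A|K_B,F_1,F_2)$, the secrecy condition \eqref{pd3} yields $I(K_A;F_1,F_2)\leq I(K_A;Z^n,F_1,F_2)\leq n\delta_n$, hence $\tfrac{1}{n}H(K_A|F_1,F_2)\geq R_n-\delta_n$. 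The agreement condition \eqref{pd1} combined with Fano's inequality gives $H(K_A|K_B,F_1,F_2)\leq 1+\delta_n\log|\mathcal K|$, and $\log|\mathcal K|/n$ is bounded (using \eqref{pd2} together with $H(K_A)\leq H(X^n,Y^n)=nH(X,Y)$), so this contribution is $o(1)$.

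Applying Theorem \ref{sd=s} to each code in the sequence gives
\[
\frac{H(Z^n|F_1,F_2)}{n}\;\geq\;\frac{\bar\Delta}{2}+\frac{I(K_A;K_B|F_1,F_2)}{2n},
\]
and passing to the $\liminf$ in $n$ produces
\[
\liminf_{n\to\infty}\frac{H(Z^n|F_1,F_2)}{n}\;\geq\;\frac{\bar\Delta + S_{\text{ni}}(X;Y\|Z)}{2}\;=\;\Delta.
\]
Thus the chosen sequence is admissible in the definition of $S_{\text{ni},\Delta}$, and since its rates tend to $S_{\text{ni}}(X;Y\|Z)$, one concludes $S_{\text{ni},\Delta}(X;Y\|Z)\geq S_{\text{ni}}(X;Y\|Z)$.

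The main obstacle is technical bookkeeping rather than genuine difficulty: one must construct a single sequence with $R_n\to S_{\text{ni}}$ (rather than a family parameterized by $\epsilon$), and make the $o(1)$ corrections uniform enough so that $\frac{1}{n}I(K_A;K_B|F_1,F_2)$ actually converges to $S_{\text{ni}}(X;Y\|Z)$ rather than merely approaches it from below; the exact $1/2$ factor multiplying the mutual information in Theorem \ref{sd=s} is what makes the specific value $\Delta=(\bar\Delta+S_{\text{ni}})/2$ the correct threshold. Once these standard steps are in place, the corollary is a direct consequence of Theorem \ref{sd=s} combined with Fano's inequality and the secrecy constraint.
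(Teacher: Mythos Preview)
Your proposal is correct and is precisely the intended derivation: the Corollary follows from Theorem \ref{sd=s} by taking a capacity-approaching sequence, using the secrecy constraint and Fano's inequality to show $\tfrac{1}{n}I(K_A;K_B|F_1,F_2)\to S_{\text{ni}}(X;Y\|Z)$, and reading off the entropy constraint. The paper does not write out a separate proof for the Corollary (the proof block that follows it is actually the proof of Theorem \ref{sd=s}), so your argument is exactly what the reader is meant to supply.
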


\begin{proof}
We have
\begin{align*}
  I(K_A; K_B | F_1, F_2) &\leq I(X^n; Y^n | F_1, F_2) \\
  &\leq H(X^n, Y^n | F_1, F_2) \\
  &= n H(X,Y) - I(F_1, F_2; X^n, Y^n).
\end{align*}
This implies
\[
  \frac{1}{n} I(F_1, F_2; X^n, Y^n) \leq H(X,Y) - \frac{1}{n} I(K_A; K_B | F_1, F_2).
\]

To prove the desired inequality, it suffices to show the following bounds:
\begin{align}
  \frac{1}{n} I(F_1, F_2; X^n, Y^n) + \frac{2}{n} H(Z^n | F_1, F_2) &\geq 
  H(X,Y) + \min_{U \mkv X \mkv (Y,Z)} [ H(Z|U) - H(Y|U) ], \label{ineq:11} \\
  \frac{1}{n} I(F_1, F_2; X^n, Y^n) + \frac{2}{n} H(Z^n | F_1, F_2) &\geq 
  H(X,Y) + \min_{V \mkv Y \mkv (X,Z)} [ H(Z|V) - H(X|V) ]. \label{ineq:12}
\end{align}

Combining \eqref{ineq:11} and \eqref{ineq:12} yields
\begin{align}
  \frac{1}{n} I(F_1, F_2; X^n, Y^n) + \frac{2}{n} H(Z^n | F_1, F_2) \geq H(X,Y) + \bar{\Delta}. \label{ineq:13}
\end{align}

We only prove inequality \eqref{ineq:11} since \eqref{ineq:12} is analogous.

Let 
$U_i = (F_1, Y^{i-1}, Z_{i+1}^n), \quad V_i = (F_2, X^{i-1}, Z_{i+1}^n)$ be new auxiliary random variables. We have:
\begin{align}
  & I(F_1, F_2; X^n, Y^n) + 2 H(Z^n | F_1, F_2) \nonumber \\
  &= I(F_1; X^n, Y^n) + I(F_2; Y^n, X^n | F_1) + 2 H(Z^n | F_1, F_2) \nonumber \\
  &= I(F_1; X^n) + I(F_2; Y^n, Z^n | F_1) + 2 H(Z^n | F_1, F_2) \nonumber \\
  &= I(F_1; X^n) + I(F_2; Z^n | F_1) + I(F_2; Y^n | F_1, Z^n) + 2 H(Z^n | F_1, F_2) \nonumber \\
  &= I(F_1; X^n) + H(Z^n | F_1) + I(F_2; Y^n | F_1, Z^n) + H(Z^n | F_1, F_2) \nonumber \\
  &\overset{(a)}{=} I(F_1; X^n) + H(Y^n | F_1) + H(Z^n | F_1, Y^n) + I(F_2; Y^n | F_1, Z^n) \nonumber \\
  &\quad + H(Z^n | F_1) - H(Y^n | F_1) + I(Y^n; Z^n | F_1, F_2) \nonumber \\
  &\overset{(b)}{=} H(X^n, Y^n) + I(F_2; Y^n | F_1, Z^n) + H(Z^n | F_1) - H(Y^n | F_1) + I(Z^n; Y^n | F_1, F_2) \nonumber \\
  &\overset{(c)}{\geq} n H(X,Y) + H(Z^n | F_1) - H(Y^n | F_1) \nonumber \\
  &\overset{(d)}{=} n H(X,Y) + \sum_i H(Z_i | F_1, Y^{i-1}, Z_{i+1}^n) - H(Y_i | F_1, Y^{i-1}, Z_{i+1}^n) \nonumber \\
  &= n H(X,Y) + \sum_i \left[ H(Z_i | U_i) - H(Y_i | U_i) \right], \label{ineq:inq15}
\end{align}
where $(a)$ follows from $H(Z^n | F_1, F_2, Y^n) = H(Z^n | F_1, Y^n)$, which is equivalent to $H(X^n | F_1, F_2, Y^n) = H(X^n | F_1, Y^n)$ which holds due to the Markov chain  chain $F_2, Y^n \mkv X^n \mkv F_1$, and for 
  $(b)$ we use the following
 \begin{align*}
     I(F_1; X^n) + H(Y^n | F_1) + H(Z^n | F_1, Y^n) &=I(F_1;X^n)+H(Y^n|F_1)+H(X^n|F_1,Y^n)\\
     &=I(F_1;X^n)+H(X^n,Y^n|F_1)\\
     &=H(X^n, Y^n).  
 \end{align*}

  Also, in the above, $(c)$ follows from  the non-negativity of conditional  mutual information and 
  $(d)$ follows from the Marton-Körner identity.

Let \( Q \) be a random variable uniformly distributed over the set \( \{1, 2, \ldots, n\} \), and let it be independent of the variables \( X \), \( Y \), and \( F_{1:r} \). Let 
$X = X_Q, \quad Y = Y_Q$ and $U = (U_Q, Q)$.
It is clear that we have the Markov chain $U \mkv X \mkv Y$. Thus, \eqref{ineq:inq15} simplifies to the desired bound.
\end{proof}

\begin{remark}
We note that the optimization problems
$
\min_{U \mkv X \mkv Y} H(Z|U) - H(Y|U) \quad \text{and} \quad \min_{V \mkv Y \mkv X} H(Z|V) - H(X|V)
$
also arise in other contexts, including the "modulo 2 sum" problem studied in \cite{nair2020}. For instance, the following is known:
\end{remark}

\begin{example}[Propositions 1 and 2 of \cite{nair2020}]
Let $P_X(0) = x$, $P_{X|Y}(0|0) = c$, and $P_{X|Y}(1|1) = d$. Then, $\bar{\Delta} = 0$ if $(1 - 2c)(1 - 2d) \leq 0$. Moreover, by choosing
\[
x = \frac{\sqrt{d(1 - d)}}{\sqrt{d(1 - d)} + \sqrt{c(1 - c)}},
\]
the value of $\bar{\Delta}$ is
$
2 H(Z) - H(X,Y)
$
if either $c = d$ or $(1 - 2c)(1 - 2d) > 0$ and $c \neq d$.
\end{example}

\section*{Acknowledgment}

Prior to the involvement of the first author in this problem, the second author had designated this project as the undergraduate thesis topic for Yicheng Cui. During the initial stages of the research, Cui made some progress on the XOR and AND patterns, using different (and generally weaker) techniques than those presented in this work.

The authors also thank Mr.\ Chin Wa Lau for his valuable assistance in proving Proposition \ref{prop2} and for supporting Yicheng Cui during his undergraduate thesis.

\appendix
\section{Capacity of a Class of Source Distributions}
\label{appndx}

Assume that $Z = g(X)$ and the channel from $X$ to $Y$ is an erasure channel, i.e., $Y = X$ with probability $1 - \epsilon$ and $\{e\}$ otherwise. 
Note that the random variables $X$, $Y$, and $Z$ do not necessarily form a Markov chain in any order, and the key capacity for this class is not known in the literature. 

We claim that $S(X;Y\|Z) = I(X;Y|Z) = (1 - \epsilon) H(X|Z)$ is achievable by one-way communication from Alice to Bob. Note that 
\begin{align}
&S_{\text{ow}}(X \rightarrow Y\|Z) = \max_{(U,V) \mkv X \mkv (Y,Z)} I(U;Y|V) - I(U;Z|V).
\end{align}
Consider $V$ as constant, and by the functional representation lemma, let $U$ be such that $I(Z;U) = 0$ and $H(X|Z,U) = 0$. Then, note that $I(U;Z|Y) = 0$ since, when $Y$ is $X$, it holds, and when $Y$ is $e$, it also holds. Then, we have
\begin{align}
I(U;Y|V) - I(U;Z|V) &= I(U;Y) - I(U;Z)\\
&= I(U;Y)\\
&= I(U;Y,Z)\\
&= I(U;Y|Z)\\
&= I(X;Y|Z) - I(X;Y|Z,U)\\
&= I(X;Y|Z).
\end{align}
This completes the proof.

{\color{black} The significance of this example lies in  demonstrating that \( S(X;Y\|Z) \) equals the one-way communication rate, despite \( X, Y, Z \) not forming a Markov chain in any arrangement.}
\section{Some of the Proofs}\label{append2}
\subsection{Proof of Lemma \ref{lem:con-hull}}
\begin{proof}[Proof of Lemma \ref{lem:con-hull}]
Note that $I(Z;T|X,Y) = 0$ holds trivially when $Z$ is a function of $(X,Y)$. Assume that, for a given $P_{X,Y}$, there exists a random variable $T$ such that $I(Z;T|X,Y) = I(Z;T) = I(X;Y|T) = 0$. Define $Q_{X,Y}^{(t)} = P(X,Y|T = t)$. From $I(X;Y|T) = 0$, it follows that $I_{Q^{(t)}}(X;Y) = 0$. Additionally, for any $t \in \mathcal{T}$ and $z \in \mathcal{Z}$, we have 
\begin{align*}
    Q^{(t)}(Z = z) &= \sum_{x,y \in \mathcal{R}_{z}} Q^{(t)}(X = x, Y = y) \\
    &= \sum_{x,y \in \mathcal{R}_{z}} P(X = x, Y = y | T = t) \\
    &= P(Z = z | T = t) \overset{(a)}{=} P(Z = z),
\end{align*}
where $\mathcal{R}_{z} = \left\{(x,y) \in \mathcal{X} \times \mathcal{Y} : z = f(x,y) \right\}$, and $(a)$ follows from $I(Z;T) = 0$. Hence, we conclude that $P_{X,Y} \in \mathrm{conv}(\color{black}{\mathcal{A}})$.
 
For the converse direction, assume that $P_{X,Y} \in \mathrm{conv}(\color{black}{\mathcal{A}})$. Therefore, we can represent $P_{X,Y}$ as a convex combination of elements from $\color{black}{\mathcal{A}}$:
\begin{align}
     P_{X,Y} = \sum_{t} \omega_{t} Q_{X,Y}^{(t)}, \quad Q_{X,Y}^{(t)} \in \color{black}{\mathcal{A}},
\end{align}
where $\omega_t \geq 0$ and $\sum_{t} \omega_{t} = 1$.
Define $P(X,Y|T = t) = Q_{X,Y}^{(t)}$. Since $Q^{(t)} \in \color{black}{\mathcal{A}}$, it follows that $I_{P}(X;Y) = I_{Q^{(t)}}(X;Y) = 0$. Next, 
\begin{align*}
    P(Z = z | T = t) &= \sum_{x,y \in \mathcal{R}_{z}} P(X = x, Y = y | T = t) \\
    &= \sum_{x,y \in \mathcal{R}_{z}} Q^{(t)}(X = x, Y = y) \\
    &= Q(Z = z) \overset{(a)}{=} P(Z = z),
\end{align*}
where $(a)$ holds since $Q^{(t)} \in \color{black}{\mathcal{A}}$. Thus, from above, we have $I_{P}(Z;T) = 0$. This completes the proof.  
\end{proof}
\subsection{Proof of Theorem \ref{th:binary}}
\begin{proof}[Proof of Theorem \ref{th:binary}]
The condition $\mathrm{Cov}(X,Y) \leq 0$ is equivalent to $p_{00}p_{11} \leq p_{01}p_{10}$. 

{\color{black} Consider the first part: we claim that
\begin{align}
\Psi(X;Y\|Z_3) \geq \Psi(X;Y\|Z_1).\label{mmmeq1}
\end{align}
When $Z$ is a function of $(X,Y)$ we can write
\begin{align*}
    \Psi(X;Y\|Z)
    &= \min_{J} I(X;Y|J) + I(X,Y;J|Z) \\
    &= \min_{J} I(X;Y|J) + I(X,Y;J) - I(J;Z).
\end{align*}
Since $Z_3$ is a function of $Z_1$, we get \eqref{mmmeq1}. 
}
Since
\[
\Psi(X;Y\|Z_3) \geq \Psi(X;Y\|Z_1),
\]
it suffices to show that, if $p_{00}p_{11} \leq p_{01}p_{10}$, then $\Psi(X;Y\|Z_1) = I(X;Y)$, and if $p_{00}p_{11} > p_{01}p_{10}$, then $\Psi(X;Y\|Z_3) = 0$.

\medskip
\noindent\textbf{Proof of $\Psi(X;Y\|Z_1) = I(X;Y)$ when $p_{00}p_{11} \leq p_{01}p_{10}$:}  
The condition $p_{00}p_{11} \leq p_{01}p_{10}$ implies that the polynomial
\[
x^2 - (p_{01} + p_{10})x + p_{00}p_{11} = 0
\]
has two non-negative real roots. Let $x_0 \leq x_1$ denote these roots. Note that $x_0x_1 = p_{00}p_{11}$ and $x_0 + x_1 = p_{01} + p_{10}$. One can verify that $p_{01} \in [x_0, x_1]$. Then, there exists $\lambda \in [0,1]$ such that
\[
p_{01} = \lambda x_0 + (1 - \lambda) x_1.
\]
Since $x_0 + x_1 = p_{01} + p_{10}$, we obtain
\[
p_{10} = \lambda x_1 + (1 - \lambda) x_0.
\]

Since $\Psi(X;Y\|Z_1) = \hat{\Psi}(X;Y\|Z_1)$, to show that $\hat{\Psi}(X;Y\|Z_1) = I(X;Y)$, we use Theorem \ref{Th:gen:classic}. It suffices to identify a $T$ such that $I(T;Z_1|X,Y) = I(T;Z_1) = I(X;Y|T) = 0$. The condition $I(T;Z_1|X,Y) = 0$ holds for any $P_{T|X,Y}$ because $Z_1$ is a function of $(X,Y)$. 

Let $T$ be a binary random variable satisfying $P(T = 1) = \lambda$ and $P(T = 0) = 1 - \lambda$, and
\[
P(X = 1, Y = 1 | T = t) = p_{11}, \quad t \in \{0,1\},
\]
\[
P(X = 0, Y = 0 | T = t) = p_{00}, \quad t \in \{0,1\},
\]
\[
P(X = 0, Y = 1 | T = 0) = P(X = 1, Y = 0 | T = 1) = x_1,
\]
\[
P(X = 0, Y = 1 | T = 1) = P(X = 1, Y = 0 | T = 0) = x_0.
\]
This yields the desired result.

\medskip
\noindent\textbf{Proof of $\Psi(X;Y\|Z_3) = 0$ when $p_{00}p_{11} > p_{01}p_{10}$:}  
We identify a binary random variable $J^*$ such that $I(X;Y|J^*) = I(X,Y;J^*|Z_3) = 0$. This would imply that
\[
\Psi(X;Y\|Z_3) \leq I(X;Y|J^*) + I(X,Y;J^*|Z_3) = 0.
\]
Let
\[
\alpha = \frac{p_{00}}{p_{00} - (p_{00}p_{11} - p_{01}p_{10})}.
\]
Under the condition $p_{00}p_{11} - p_{01}p_{10} > 0$, we have $\alpha \geq 1$. Suppose that
\[
P_{X,Y|J^* = 0} =
\begin{bmatrix}
0 & 0 \\
0 & 1
\end{bmatrix},
\quad
P_{X,Y|J^* = 1} =
\begin{bmatrix}
\alpha p_{00} & \alpha p_{01} \\
\alpha p_{10} & 1 - \alpha(1 - p_{11})
\end{bmatrix},
\]
and $P(J^* = 1) = \frac{1}{\alpha}$. One can directly verify that
\[
P_{X,Y} = P(J^* = 1) P_{X,Y|J^* = 1} + P(J^* = 0) P_{X,Y|J^* = 0}.
\]
Moreover, for this choice of $J^*$, we have $I(X;Y|J^*) = I(X,Y;J^*|Z) = 0$.

\medskip
\noindent\textbf{Second part:}  
We consider two cases. If $p_{00}p_{11} \leq p_{01}p_{10}$, we use the fact that $\Psi(X;Y\|Z_1) = I(X;Y)$ and
\[
\Psi(X;Y\|Z_2) \geq \Psi(X;Y\|Z_1) = I(X;Y).
\]
In the case of $p_{00}p_{11} \leq p_{01}p_{10}$, consider the following random variable:
\[
\hat{Z}_1 =
\begin{cases}
0 & (X,Y) = (0,1), \\
1 & (X,Y) = (1,0), \\
2 & (X,Y) \in \{(0,0), (1,1)\}.
\end{cases}
\]
By symmetry, from the first part we deduce that, in the case $p_{00}p_{11} \leq p_{01}p_{10}$, we have $\Psi(X;Y\|\hat{Z}_1) = I(X;Y)$. Since
\[
\Psi(X;Y\|Z_2) \geq \Psi(X;Y\|\hat{Z}_1) = I(X;Y),
\]
we obtain the desired result.
\end{proof}
\subsection{Proof of Theorem \ref{th:ter-bin}}
\begin{proof}[Proof of Theorem \ref{th:ter-bin}]
Note that when $P(Y=2)=0$, i.e., when $Y\in\{0,1\}$, the random variable $Z$ can be expressed as $Z=X\oplus Y$. In this case, the proof of Theorem 
\ref{th:binary} shows that one can find a random variable $T$ such that $I(Z;T|X,Y)=I(Z;T)=I(X;Y|T)=0$. Lemma \ref{lem:con-hull} would then imply that any $P_{X,Y}$ satisfying $P_{Y}(2)=0$ belongs to the convex hull of the set $\mathcal{A}(P_Z)$ as defined in \eqref{con:formula}. More specifically, let 
$\alpha=P_Z(0)$ and $\bar{\alpha}=P_Z(1)=1-\alpha$, and let us denote $\mathcal{A}(P_Z)$ by $\mathcal{A}(\alpha)$. Then, for any $r,t\in[0,1]$, the following joint distribution belongs to 
the convex hull of the set $\mathcal{A}(\alpha)$:
\begin{align}
  B_{r,t}(x,y)=\begin{tabular}{|c|c|c|c|}
 \hline
   $\mathbf{X {\color{black}\resizebox{6mm}{5mm}{$\backslash$}} Y}$& $\mathbf{0}$ & $\mathbf{1}$ & $\mathbf{2}$   \\
  \hline
 $\mathbf{0}$&$\alpha r$& $\bar{\alpha}\bar{t}$ &0\\
\hline
 $\mathbf{1}$&$\bar{\alpha}t$ & $\alpha\bar{r}$ &0 \\ 
\hline
\end{tabular}
 \end{align}
{\color{black}
In the above table, the values of $X$ are listed in the first column (i.e., $X \in \{0, 1\}$) and the values of $Y$ are listed in the first row (i.e., $Y \in \{0, 1, 2\}$). The remaining cells of the table contain probabilities.}

Observe that when $P(Y=0)=0$, the mapping from $X$ and $Y\in\{1,2\}$ to $Z$ is again an XOR map. A similar argument then shows that any $P_{X,Y}$ satisfying $P_{Y}(0)=0$ belongs to the convex hull of the set $\mathcal{A}(\alpha)$.
In other words, for any $s,\ell\in[0,1]$, the following joint distribution belongs to the convex hull of the set $\mathcal{A}(\alpha)$:
\begin{align}
 C_{s,\ell}(x,y)=\begin{tabular}{|c|c|c|c|}
 \hline
   $\mathbf{X {\color{black}\resizebox{6mm}{5mm}{$\backslash$}} Y}$& $\mathbf{0}$ & $\mathbf{1}$ & $\mathbf{2}$  \\
 \hline
$\mathbf{0}$& 0& $\bar{\alpha}\bar{s}$ & $\alpha \ell$  \\
\hline
$\mathbf{1}$&0& 
$\alpha\bar{\ell}$ & $\bar{\alpha}s$ \\ 
\hline
\end{tabular}
 \end{align}
Observe that for any $\omega\in[0,1]$, the linear combination
$$\omega B_{r,t}(x,y)+\bar{\omega}C_{s,\ell}(x,y)$$
must also belong to the convex hull of the set $\mathcal{A}(\alpha)$.

Let $p_{ij}=P_{X,Y}(i,j)$. We have $\alpha=p_{00}+p_{02}+p_{11}$. Set 
$$r=\min\left\{1, \frac{\bar{\alpha}p_{00}}{\alpha p_{10}} \right\},$$
$$\omega=\frac{ p_{00}}{\alpha r}, \quad t=\frac{\alpha p_{10}}{\bar{\alpha}p_{00}}r, \quad 
s=\frac{\alpha p_{12}r}{\bar{\alpha}(\alpha r-p_{00})}, \quad
\ell=\frac{p_{02}r}{\alpha r-p_{00}}.$$
One can then verify that
$$P_{X,Y}(x,y)=\omega B_{r,t}(x,y)+\bar{\omega}C_{s,\ell}(x,y)$$
belongs to the convex hull of the set $\mathcal{A}(\alpha)$.
The conditions $$\frac{p_{00}}{\alpha}+\frac{p_{12}}{\bar{\alpha}}\leq 1, \quad \frac{p_{02}}{\alpha}+\frac{p_{10}}{\bar{\alpha}}\leq 1$$  
imply that 
$$\omega,t,r,s,\ell\in[0,1].$$ Consequently, one can identify a random variable $T$ such that $I(Z;T|X,Y)=I(Z;T)=I(X;Y|T)=0$ holds for $P_{X,Y}$, provided that
the conditions $$\frac{p_{00}}{\alpha}+\frac{p_{12}}{\bar{\alpha}}\leq 1, \quad \frac{p_{02}}{\alpha}+\frac{p_{10}}{\bar{\alpha}}\leq 1$$  
are satisfied.

\end{proof}

\subsection{Proof of Theorem \ref{thm:thm5}}
\begin{proof}[Proof of Theorem \ref{thm:thm5}]
Let the alphabets of $X$ and $Y$ take values in $\{0,1,2,\cdots,s_{1}-1\}$ and $\{0,1,2,\cdots,s_{2}-1\}$, respectively. Let $Z=f(X,Y)$.  
The function $f$ corresponds to a table
\begin{align}
  \begin{tabular}{|c|c|c|c|c|}
 \hline
  $\mathbf{X {\color{black}\resizebox{6mm}{5mm}{$\backslash$}}Y}$ & $\mathbf{0}$ & $\mathbf{1}$ & $\cdots$ & $\mathbf{s_2-1}$ \\
  \hline
  $\mathbf{0}$ &  $f(0,0)$ &  $f(0,1)$ & $\cdots$ &  $f(0,s_2-1)$  \\
  $\mathbf{1}$ &  $f(1,0)$ &  $f(1,1)$ &  $\cdots$ & $f(1,s_2-1)$ \\
  $\vdots$ &  $\vdots$ &  $\vdots$ &  $\cdots$ & $\vdots$  \\
  $\mathbf{s_1-1}$ &  $f(s_1-1,0)$ & $f(s_1-1,1)$ & $\cdots$ & $f(s_1-1,s_2-1)$  \\
  \hline
\end{tabular}
\end{align}

{\color{black}
In the above table, the values of $X$ are listed in the first column (i.e., $X \in \{0, 1, \dots\}$) and the values of $Y$ are listed in the first row (i.e., $Y \in \{0, 1, \dots\}$). The remaining cells of the table contain the values $f(x,y)$. For example, the entry in the second row and second column is $f(0,0)=f(X=0,Y=0)$.}

A table is considered valid if the equality $\Psi(X;Y|Z)=I(X;Y)$ holds for all distributions $P_{X,Y}$. Since any distribution over a subset of $\mathcal{X} \times \mathcal{Y}$ can be extended to a distribution over $\mathcal{X} \times \mathcal{Y}$, it follows that if a table is valid, all of its sub-tables will also be valid. Consider a $2\times 2$ table.  
As stated in Theorem \ref{th:binary}, the equality $\Psi(X;Y|Z)=I(X;Y)$ does not hold for any arbitrary distribution $P_{X,Y}$ when the functions are AND, OR, or $Z_1$. It is also evident that if $Z=f(X,Y)$ is a one-to-one map, then $\Psi(X;Y|Z)\leq I(X;Y|Z)=0$, and the equality $\Psi(X;Y|Z)=I(X;Y)$ would not hold for every $P_{X,Y}$. Consequently, for $2\times 2$ tables, the following combinations are invalid if $A,B,C$, and $D$ are distinct symbols:
 \begin{align}
\begin{tabular}{|c|c|c|}
  \hline
   $\mathbf{X {\color{black}\resizebox{6mm}{5mm}{$\backslash$}} Y}$ & $\mathbf{0}$ & $\mathbf{1}$  \\
  \hline
  $\mathbf{0}$ &  $A$ &  $B$  \\
  $\mathbf{1}$ &  $B$ &  $C$ \\  
  \hline
\end{tabular}
~~
\begin{tabular}{|c|c|c|}
  \hline
 $\mathbf{X {\color{black}\resizebox{6mm}{5mm}{$\backslash$}} Y}$ & $\mathbf{0}$ & $\mathbf{1}$  \\
  \hline
  $\mathbf{0}$ &  $A$ &  $B$  \\
  $\mathbf{1}$ &  $C$ &  $A$ \\  
  \hline
\end{tabular}
~~
\begin{tabular}{|c|c|c|}
  \hline
   $\mathbf{X {\color{black}\resizebox{6mm}{5mm}{$\backslash$}} Y}$ & $\mathbf{0}$ & $\mathbf{1}$   \\
  \hline
  $\mathbf{0}$ &  $A$ &  $B$  \\
  $\mathbf{1}$ &  $A$ &  $A$ \\  
  \hline
\end{tabular}
~~
\begin{tabular}{|c|c|c|}
  \hline
   $\mathbf{X {\color{black}\resizebox{6mm}{5mm}{$\backslash$}} Y}$ & $\mathbf{0}$ & $\mathbf{1}$  \\
  \hline
  $\mathbf{0}$ &  $A$ &  $B$  \\
  $\mathbf{1}$ &  $C$ &  $D$ \\  
  \hline
  \end{tabular}
\end{align}
and the valid $2 \times 2$ tables are limited to the following:
\begin{align}\label{valid:2*2:tables}
 T_1=\begin{tabular}{|c|c|c|}
  \hline
   $\mathbf{X {\color{black}\resizebox{6mm}{5mm}{$\backslash$}} Y}$ & $\mathbf{0}$ & $\mathbf{1}$  \\
  \hline
  $\mathbf{0}$ &  $A$ &  $B$  \\
  $\mathbf{1}$ &  $B$ &  $A$ \\  
  \hline
\end{tabular}
\qquad\qquad T_2=\begin{tabular}{|c|c|c|}
  \hline
   $\mathbf{X {\color{black}\resizebox{6mm}{5mm}{$\backslash$}} Y}$ & $\mathbf{0}$ & $\mathbf{1}$ \\
  \hline
  $\mathbf{0}$ &  $A$ &  $A$  \\
  $\mathbf{1}$ &  $A$ &  $A$ \\  
  \hline
\end{tabular}.
\end{align}

Now, consider a general table $T$ of arbitrary size $s_1\times s_2$ where $(s_1,s_2)\neq (2,2)$. Since every $2\times 2$ sub-table must be valid, we conclude that for every $i_1\neq i_2$ and $j_1\neq j_2$, we must either have 
$$T(i_1,j_1) = T(i_1,j_2) = T(i_2,j_1) = T(i_2,j_2)$$
or
$$T(i_1,j_1) = T(i_2,j_2) \neq T(i_2,j_1) = T(i_1,j_2).$$
In both cases, $T(i_1,j_1) = T(i_2,j_2)$ for every $i_1\neq i_2$ and $j_1\neq j_2$. In other words, if we move from one cell to another cell by changing both the row and the column, the value of the new cell must be equal to the value of the old cell. Note that in an $s_1\times s_2$ table where $(s_1,s_2)\neq (2,2)$, we can move from a cell to all the other cells by a sequence of such moves. For instance, in a $2\times 3$ table, we can move from $(1,1)$ to $(1,2)$ by the following sequence:
$$(1,1)\rightarrow (2,2)\rightarrow (1,3)\rightarrow (1,2).$$
This shows that when $(s_1,s_2)\neq (2,2)$, $T(i_1,j_1) = T(i_2,j_2)$ for all $i_1,i_2,j_1,j_2$. 
\end{proof}

\subsection{Proof of Theorem \ref{thm6n}}
\begin{proof}[Proof of Theorem \ref{thm6n}]
{\color{black}Let $P_{J|X',Y',Z'}$ be a minimizer for the optimization problem in $\hat{\Psi}(X',Y'\|Z')$, i.e., $$\hat{\Psi}(X',Y'\|Z')=I(X';Y'|J)+I(X',Y';J|Z').$$ 
We know that such a minimizer exists because a cardinality bound on $J$ is available for $\hat{\Psi}(X',Y'\|Z')$.

We define the joint distribution between $J$ and $(X,Y,Z,X',Y',Z')$ as 
$P_{J|X',Y',Z'}P_{X',Y',Z',X,Y,Z}$.
This construction implies the Markov chain
\begin{align}
 (X,Y,Z)\mkv (X',Y',Z')\mkv J.  \label{mar:J} 
\end{align}

Since
$$\hat{\Psi}(X;Y\|Z)\leq I(X;Y|J)+I(X,Y;J|Z),$$
it suffices to show that
\begin{align}
I(X;Y|J)+I(X,Y;J|Z)-\hat{\Psi}(X';Y'\|Z')&=
I(X;Y|J)+I(X,Y;J|Z)-I(X';Y'|J)-I(X',Y';J|Z')\nonumber
\\&\leq I(X,Y;Z'|Z)+I(Y',Z';X|X')+I(X',Z';Y|Y')+I(X;Y|X',Y',Z').\label{inq:ak1}
\end{align}
This inequality follows from adding the following three inequalities:
\begin{align}
I(X;Y|J)+I(X,Y;J|Z)-I(X,X';Y,Y'|J)-I(X,X',Y,Y';J|Z')
&\leq I(X,Y;Z'|Z),\label{eqnTT1}\\
I(X,X';Y,Y'|J)+I(X,X',Y,Y';J|Z')-
I(X,X';Y'|J)-I(X,X',Y';J|Z')&\leq I(X,X',Z';Y|Y'),\label{eqnTT2}\\
I(X,X';Y'|J)+I(X,X',Y';J|Z')-I(X';Y'|J)-I(X',Y';J|Z')&\leq I(Y',Z';X|X').\label{eqnTT3}
\end{align}

The inequality \eqref{eqnTT1} follows from
\begin{align}
&I(X;Y|J)+I(X,Y;J|Z)-I(X,X';Y,Y'|J)-I(X,X',Y,Y';J|Z')\nonumber
\\&\leq I(X,Y;J,Z'|Z)-I(X,X',Y,Y';J|Z')\nonumber
\\&\leq I(X,Y;Z'|Z)+I(X,Y,Z;J|Z')-I(X,X',Y,Y';J|Z')\nonumber
\\&\leq I(X,Y;Z'|Z),\nonumber
\end{align}
where the last step follows from \eqref{mar:J}.

The inequality \eqref{eqnTT2} follows from
\begin{align}
&I(X,X';Y,Y'|J)+I(X,X',Y,Y';J|Z')-
I(X,X';Y'|J)-I(X,X',Y';J|Z')\nonumber\\&
=I(X,X';Y|J,Y')+I(Y;J|X,X',Y',Z')
\nonumber\\&
=I(X,X';Y|J,Y')-I(Y;J|X,X',Y',Z') \label{stepus}\\
&\leq I(X,X',Z';Y|Y')-
I(J;Y|Y')\nonumber
\\&\leq I(X,X',Z';Y|Y'),\nonumber
\end{align}
where \eqref{stepus} follows from \eqref{mar:J}.

Finally, the inequality \eqref{eqnTT3} follows similarly from
\begin{align}
&I(X,X';Y'|J)+I(X,X',Y';J|Z')-I(X';Y'|J)-I(X',Y';J|Z')\nonumber\\
&=I(X;Y'|X',J)+I(X;J|X',Y',Z')\nonumber\\
&=I(X;Y'|X',J)-I(X;J|X',Y',Z')\nonumber\\
&\leq I(Y',Z';X|X')-I(J;X|X')\nonumber\\
&\leq I(Y',Z';X|X').\nonumber
\end{align}
This completes the proof.}
\end{proof}

\subsection{Proof of Theorem \ref{tight:the1}}
\begin{proof}[Proof of Theorem \ref{tight:the1}]
Let $F_{1:r}=(F_1,\cdots,F_r)$. Take some arbitrary $P_{J|X,Y,Z}$ and suppose $(X^n,Y^n,Z^n,J^n)$ are $n$ i.i.d.\ copies according to $P_{X,Y,Z}P_{J|X,Y,Z}$. 
Define the auxiliary random variables as $V_i=(J^{i-1},Z_{i+1}^n,F_{1:r})$ and $U_i=K_A$. The following chain of equations follows:
\begin{align}
H(K_A)&=I(K_A;Z^n,F_{1:r})+H(K_A|Z^n,F_{1:r})\\
&\overset{(a)}{\leq} n\epsilon_{n}+H(K_A|Z^n,F_{1:r})\nonumber\\
&= n\epsilon_{n}+ H(K_A|Z^n,F_{1:r})-H(K_A|J^n,F_{1:r})+H(K_A|J^n,F_{1:r})\nonumber\\
&\overset{(b)}{=} n\epsilon_{n}+\sum_{i=1}^{n}I(K_A;J_i|J^{i-1},Z_{i+1}^{n},F_{1:r})-I(K_A;Z_i|J^{i-1},Z_{i+1}^{n},F_{1:r})+H(K_A|J^n,F_{1:r})\nonumber\\
&\overset{(c)}{=}n\epsilon_{n}+n\delta_n+\sum_{i=1}^{n}I(U_i;J_i|V_i)-I(U_i;Z_i|V_i)+I(K_A;K_B|J^n,F_{1:r})\nonumber\\
&\overset{(d)}{\leq} n\epsilon_{n}+n\delta_n+\sum_{i=1}^{n}I(U_i;J_i|V_i)-I(U_i;Z_i|V_i)+I(X^n;Y^n|J^n,F_{1:r})\nonumber\\
&\overset{(e)}{\leq} n\epsilon_{n}+n\delta_n+\sum_{i=1}^{n}I(U_i;J_i|V_i)-I(U_i;Z_i|V_i)+I(X^n;Y^n|J^n)\nonumber\\
&\leq n\epsilon_{n}+n\delta_n+\sum_{i=1}^{n}I(U_i;J_i|V_i)-I(U_i;Z_i|V_i)+nI(X;Y|J)\label{ineq:1}
\end{align}
where $\epsilon_n,\delta_n\rightarrow 0$ as $n\rightarrow\infty$; steps (a) and (c) are derived using Fano's inequality, (b) follows from the Marton–Körner identity, and (d) is a consequence of the Data Processing Inequality (DPI). To establish 
(e), we proceed as follows:
\begin{align}
    I(X^n;Y^n|J^n)&\overset{(a)}{=} I(X^n,F_1;Y^n|J^n)\overset{(b)}{\geq} I(X^n;Y^n|J^n,F_1)\overset{(c)}{=} I(X^n;Y^n,F_2|J^n,F_1)\overset{(d)}{\geq}
    I(X^n;Y^n|J^n,F_1,F_2)\nonumber\\
    &= I(X^n,F_3;Y^n|J^n,F_1,F_2)\geq I(X^n;Y^n|J^n,F_1,F_2,F_3)= \cdots\geq I(X^n;Y^n|J^n,F_{1:r}),\nonumber
\end{align}
where (a) and (b) follow from the Markov chains $(Y^n,F^{i-1})\mkv X^n\mkv F_i$ for odd $i$ and $(X^n,F^{i-1})\mkv Y^n\mkv F_i$ for even $i$, respectively. Also, (b) and (d) are derived from the chain rule.

Let $Q$ be uniformly distributed over $\{1,2,\cdots,n\}$ and independent of $X,Y,Z,J,F_{1:r}$. Define the random variables $V=(V_Q,Q)$, $U=U_Q$, $Z=Z_Q$, and $J=J_Q$. Under these definitions, inequality \eqref{ineq:1} simplifies to:  
\begin{align}
      \frac{H(K_A)}{n}\leq \epsilon_{n}+\delta_n+
      I(X;Y|J)+I(U;J|V)-I(U;Z|V).\label{eq:q1}
\end{align}
Additionally, the auxiliary random variables $V_i=(J^{i-1},Z_{i+1}^n,F_{1:r})$ and $U_i=K_A$ imply the Markov chain relationship $(U,V)\mkv (X,Y)\mkv (J,Z)$. 
Assuming that $n$ is sufficiently large, to prove constraint \eqref{cons1} for the random variables
$V_i,J_i,Z_i$, and $F_{1:r}$ as defined above, we proceed as follows:
\begin{align}
      I(V;Z)=\frac{1}{n}\sum_{i=1}^{n}I(V_i;Z_i)&=\frac{1}{n}\sum_{i=1}^{n}I(J^{i-1},F_{1:r};Z_i|Z_{i+1}^{n})
      \\&=\frac{1}{n}\sum_{i=1}^{n}I(F_{1:r};Z_i|Z_{i+1}^{n})+\frac{1}{n}\sum_{i=1}^{n}I(J^{i-1};Z_i|Z_{i+1}^{n},F_{1:r})\nonumber\\
      &\overset{(a)}{=}\frac{1}{n}\sum_{i=1}^{n}I(F_{1:r};Z_i|Z_{i+1}^{n})+\frac{1}{n}\sum_{i=1}^{n}I(Z_{i+1}^{n};J_i|J^{i-1},F_{1:r})\nonumber\\
      &\overset{(b)}{\leq} \frac{1}{n}I(Z^n;F_{1:r})+ \frac{1}{n}\sum_{i=1}^{n}I(J^{i-1},Z_{i+1}^{n},F_{1:r};J_i)\nonumber\\
      &=\frac{1}{n}I(Z^n;F_{1:r}) +I(V;J)
      =I(V;J)+H(Z)-\frac{1}{n}H(Z^n|F_{1:r})\nonumber\\
      &\overset{(c)}{\leq} I(V;J)+H(Z)-\Delta+\epsilon'_n\nonumber,
\end{align}
where step (a) follows from the Marton–Körner identity, 
(b) is derived using the chain rule, and 
(c) utilizes the assumption
$$\frac{1}{n}H(Z^n|F_{1:r})\geq \Delta-\epsilon'_n$$
for a vanishing sequence $\epsilon'_n$.
Thus, for any $J$ and a specific choice of auxiliary random variables 
$U$ and $V$ that satisfy the Markov chain 
$(U,V)\mkv (X,Y)\mkv (J,Z)$
and inequality \eqref{cons1}, by taking the limit as $n\rightarrow\infty$ in equation \eqref{eq:q1}, we obtain:
\begin{align*}
      S_{\Delta}(X;Y\|Z)\leq 
      I(X;Y|J)+I(U;J|V)-I(U;Z|V).
\end{align*}

The proof of the cardinality bounds is standard, and we omit it.
The second part of this theorem, for the special case of $Z=f(X,Y)$, follows from Lemma \ref{helper} (after applying the appropriate change of variables).
\end{proof}

\begin{lemma}\label{helper}
Suppose that $Z=g(X)$. For every joint distribution $P_{V,X,Y}$, the maximizer of $I(U;Y|V)-I(U;Z|V)$ over all $P_{U|V,X}$ is achieved when $H(X|U,V,Z)=0$.
\end{lemma}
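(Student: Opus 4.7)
The plan is to start from an arbitrary maximizer $U^{*}$ of $I(U;Y|V)-I(U;Z|V)$ and enlarge it by appending a label $T$ that separates the $X$-values sharing the same value of $Z$, producing a new auxiliary $U=(U^{*},T)$ that is still a maximizer but additionally satisfies $H(X|U,V,Z)=0$. Intuitively, the label $T$ resolves any remaining ambiguity about $X$ inside each fiber of $Z$, while being chosen independent of $Z$ given $(U^{*},V)$ so that it does not hurt the objective.

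The key tool is a conditional version of the functional representation lemma applied to the pair $(X,Z)$ with side information $W=(U^{*},V)$. This produces a random variable of the form $T=\varphi(X,U^{*},V,R)$, where $R$ is external randomness independent of $(X,V,Y,Z,U^{*})$, such that
\[
I(T;Z|U^{*},V)=0 \quad \text{and} \quad H(X|U^{*},V,Z,T)=0.
\]
This conditional FRL is obtained by applying the standard FRL pointwise to each conditional distribution $P_{X,Z|W=w}$ and assembling the resulting auxiliaries into a single $T$, with a finite alphabet.

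Now set $U=(U^{*},T)$. Because $U^{*}$ is generated from $(X,V)$ plus randomness independent of $Y$ and $T$ is generated from $(X,U^{*},V)$ plus fresh randomness independent of $Y$, the Markov chain $U\to(X,V)\to Y$ is preserved, so $P_{U|V,X}$ is a valid candidate in the optimization. By the chain rule of mutual information,
\begin{align*}
I(U;Y|V)-I(U;Z|V) &= \bigl[I(U^{*};Y|V)-I(U^{*};Z|V)\bigr] + I(T;Y|V,U^{*}) - I(T;Z|V,U^{*}) \\
&\geq I(U^{*};Y|V)-I(U^{*};Z|V),
\end{align*}
since $I(T;Z|V,U^{*})=0$ by construction and $I(T;Y|V,U^{*})\geq 0$. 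Since $U^{*}$ was already a maximizer, equality must hold, so $U$ is also a maximizer, and by the FRL property $H(X|U,V,Z)=H(X|U^{*},T,V,Z)=0$.

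The main obstacle I expect is carrying out the conditional FRL cleanly, in particular ensuring that the external randomness used to define $T$ can be chosen jointly independent of $Y$ (so the Markov chain $U\to(X,V)\to Y$ survives) and that $T$ lives on a finite alphabet. These are routine once one is careful with the bookkeeping, but the construction must be spelled out since the conditional FRL is not always stated in the precise form we need here.
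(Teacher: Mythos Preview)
Your proposal is correct and follows essentially the same route as the paper. The paper also appends an FRL-generated variable $W$ to an arbitrary $U$ so that $H(X|U,W,V,Z)=0$ while $I(W;Z|U,V)=0$, and then observes that the added term $I(W;Y|U,V)-I(W;Z|U,V)\ge 0$. Two small differences worth noting: the paper applies the \emph{unconditional} FRL to the pair $\bigl((Z,U,V),X\bigr)$, obtaining $I(W;Z,U,V)=0$ directly, which sidesteps your bookkeeping concern about conditional constructions and external randomness; and it starts from an arbitrary $P_{U|V,X}$ rather than a presupposed maximizer, which is cleaner since the inequality you derive already suffices without assuming a maximizer exists.
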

\begin{proof}[Proof of Lemma \ref{helper}] 
Take some arbitrary tuple $P_{U|V,X}$. By the functional representation lemma, we can find a conditional distribution $P_{W|U,V,Z,X}$ such that $I(Z,U,V;W)=0$ and $H(X|Z,U,V,W)=0$. Now, consider the joint distribution $P_{W|U,V,Z,X}P_{Y|X}$.
Let $U'=(U,W)$. We claim that changing $U$ to $(U,W)$ would not decrease the expression, and moreover $H(X|Z,U',V)=0$. This would complete the proof. We need to show that
\begin{align}
    I(U;Y|V)-I(U;Z|V)\leq I(U';Y|V)-I(U';Z|V)
\end{align}
Equivalently,
\begin{align}
    0\leq I(W;Y|U,V)-I(W;Z|U,V)\label{conn}
\end{align}
which holds since $I(W;U,V,Z)=0$.   
\end{proof}

\section{An equivalent form of Conjecture \ref{conjecture:1}}
\label{appendixNew}
{\color{black}
Note that by setting $Z=Z'$, $Y=Y'$, and replacing $X$ by $(X,X')$ in Conjecture \ref{conjecture:1}, we obtain the following equation (as a special case):
\begin{align}
    S(X,X';Y'\|Z')-S(X';Y'\|Z')
    \leq I(Y',Z';X|X'), \qquad \forall P_{X,X',Y',Z'}.\label{eqnSPCconj}
\end{align}
Observe that $S(X,X';Y'\|Z')\geq S(X,X';X,Y'\|X,Z')$ because the first terminal can publicly reveal $X$ to all parties. Therefore, \eqref{eqnSPCconj} can be weakened to
\begin{align}
    S(X,X';X,Y'\|X,Z')-S(X';Y'\|Z')
    \leq I(Y',Z';X|X'), \qquad \forall P_{X,X',Y',Z'}.\label{eqnSPCconj22}
\end{align}
Therefore, if Conjecture \ref{conjecture:1} is true, \eqref{eqnSPCconj22} holds. We now claim the reverse.
\begin{theorem}
    Conjecture \ref{conjecture:1} holds if and only if \eqref{eqnSPCconj22} holds.
\end{theorem}
\begin{proof}
    
We first show that the conjecture holds if \eqref{eqnSPCconj} holds. The reason is as follows: since $S(X;Y\|Z)=S(Y;X\|Z)$, from \eqref{eqnSPCconj}, we also obtain the symmetric form:
\begin{align}
    S(X';Y,Y'\|Z')-S(X';Y'\|Z')
    \leq I(X',Z';Y|Y').\label{eqnSPCconj-form2}
\end{align}
In particular, \eqref{eqnSPCconj-form2} implies
\begin{align}
    S(X,X';Y,Y'|Z')-S(X,X';Y'|Z')
    \leq I(X,X',Z';Y|Y').\label{eqnSPCconj2}
\end{align}
Adding \eqref{eqnSPCconj} and \eqref{eqnSPCconj2}, we obtain
\begin{align}
    S(X,X';Y,Y'|Z')-S(X';Y'|Z')
    \leq I(Y',Z';X|X')+I(X',Z';Y|Y')+I(X;Y|X',Y',Z').\label{eqnSPCconj3}
\end{align}
Finally, note that 
\begin{align}
    S(X;Y\|Z)-S(X';Y'|Z')&=
    S(X;Y\|Z)-S(X,X';Y,Y'|Z')+S(X,X';Y,Y'|Z')-S(X';Y'|Z')\nonumber
    \\&
    \leq S(X;Y\|Z)-S(X;Y|Z')+S(X,X';Y,Y'|Z')-S(X';Y'|Z')\nonumber
    \\&
    \leq S_{\text{ow}}(X,Y\rightarrow Z'|Z)+S(X,X';Y,Y'|Z')-S(X';Y'|Z')\label{eqnexplnewf}
    \\&
    \leq S_{\text{ow}}(X,Y\rightarrow Z'|Z)+I(Y',Z';X|X')+I(X',Z';Y|Y')+I(X;Y|X',Y',Z')\label{eqnexplnewf2}
\end{align}
where \eqref{eqnexplnewf} follows from \eqref{eqnd1}, and \eqref{eqnexplnewf2} follows from \eqref{eqnSPCconj3}.

It remains to show that \eqref{eqnSPCconj22} implies \eqref{eqnSPCconj}.
Equation \eqref{eqnSPCconj} follows if one can prove the following: 
   $$S(X;Y\|Z)\leq \inf_{\tilde X\mkv X\mkv (Y,Z)}S(\tilde X;Y\|Z)+I(X;Y,Z|\tilde{X}).$$

Let 
\begin{align}
\Phi(X;Y\|Z)=\inf_{\tilde X\mkv X\mkv (Y,Z)}S(\tilde X;Y\|Z)+I(X;Y,Z|\tilde{X}).\label{defPhi1}
\end{align}
We use the approach in \cite{gohari-terminal1}. To prove $S(X;Y\|Z)\leq \Phi(X;Y\|Z)$, choose an arbitrary $(n,R,\delta)$ code, and let $W_1$ and $W_2$ be the private randomness used by the two parties, i.e., $H(F_i|X^n,F^{i-1},W_1)=0$ for odd $i$ and $H(F_i|Y^n,F^{i-1},W_2)=0$ for even $i$. Then, we have 
\begin{align}\label{eqn(a)}
\frac{1}{n}H(K_B)
&= \frac{1}{n}I(X^n,F_{1:r},W_1;K_B)+\epsilon_{n}
\\&\label{eqn(b)}= \frac{1}{n}(I(X^n,F_{1:r},W_1;K_B)-I(K_B;Z^n,F_{1:r}))+2\epsilon_{n}
\\&\label{eqn(c)}\leq  \frac{1}{n}\Phi(X^n,F_{1:r},W_1;K_B\|Z^n,F_{1:r})+2\epsilon_{n}\\
&\label{eqn(d)}\leq \frac{1}{n}\Phi(X^n,F_{1:r},W_1;Y^n,F_{1:r},W_2\|Z^n,F_{1:r})+2\epsilon_{n}
\\
&\label{eqn(e)}\leq \frac{1}{n}\Phi(X^n,F_{1:r-1},W_1;Y^n,F_{1:r-1},W_2\|Z^n,F_{1:r-1})+2\epsilon_{n}\\
&\label{eqn(f)}\leq \frac{1}{n}\Phi(X^n,F_{1:r-2},W_1;Y^n,F_{1:r-2},W_2\|Z^n,F_{1:r-2})+2\epsilon_{n}\\
&\leq \cdots
\\
&\label{eqn(g)}\leq \frac{1}{n}\Phi(X^n,W_1;Y^n,W_2\|Z^n)+2\epsilon_{n}
\\
&\label{eqn(h)}=\frac{1}{n}\Phi(X^n;Y^n\|Z^n)+2\epsilon_{n}\\
&\label{eqn(i)}\leq \Phi(X;Y\|Z)+2\epsilon_{n}
\end{align}
where $\epsilon_n$ converges to zero as $\delta$ converges to zero, \eqref{eqn(a)} follows from the fact that $K_A=K_B$ with probability $1-\delta$ and $H(K_A|X^n,F_{1:r},W_1)=0$, \eqref{eqn(b)} follows from the privacy constraint, \eqref{eqn(c)} from property iv in Lemma \ref{lemma14ext}, \eqref{eqn(d)} follows from property iii in Lemma \ref{lemma14ext} and the fact that $K_B$ is a function of $(Y^{n},W_2,F_{1:r})$, \eqref{eqn(e)}--\eqref{eqn(g)} follow from property ii in Lemma \ref{lemma14ext} and the fact that $H(F_i|X^n,F^{i-1},W_1)=0$ for odd $i$ and $H(F_i|Y^n,F^{i-1},W_2)=0$ for even $i$, \eqref{eqn(h)} follows from property v in Lemma \ref{lemma14ext}, and \eqref{eqn(i)} follows from property i in Lemma \ref{lemma14ext}.

\end{proof}

\begin{lemma}\label{lemma14ext}
    Assuming \eqref{eqnSPCconj22}, the function $\Phi$ as defined in \eqref{defPhi1} satisfies the following properties:
    \begin{enumerate}[i)]
        \item $\Phi(X^n;Y^n\|Z^n)\leq n \Phi(X;Y\|Z)$ when $P_{X^n,Y^n,Z^n}=\prod_{i=1}^{n}P_{X_i,Y_i,Z_i}$ is i.i.d.,
        \item $\Phi(F,X;F,Y\|F,Z)\leq \Phi(F,X;Y\|Z)$ and $\Phi(F,X;F,Y\|F,Z)\leq \Phi(X;F,Y\|Z)$ for any arbitrary $P_{F,X,Y,Z}$,
        \item $\Phi(X;Y'\|Z)\leq  \Phi(X;Y\|Z)$ when $H(Y'|Y)=0$,
        \item $\Phi(X;Y\|Z)\geq I(X;Y)-I(Y;Z)$,
        \item $\Phi(X,W_1;Y,W_2\|Z)=\Phi(X;Y\|Z)$ if $P_{W_1,W_2,X,Y,Z}=P_{W_1}P_{W_2}P_{X,Y,Z}$.
    \end{enumerate}
\end{lemma}
\begin{proof}
    Observe that $S(X;Y\|Z)$ satisfies similar properties as above, i.e., $S(X^n;Y^n\|Z^n)=nS(X;Y\|Z)$, and $S(F,X;F,Y\|F,Z)\leq S(F,X;Y\|Z)$, etc.; see \cite{gohari-terminal1}.

\emph{Proof of property i)} Let $\epsilon>0$ be arbitrary. Take some $\tilde{X}$ satisfying $\tilde X\mkv X\mkv (Y,Z)$ such that $$S(\tilde X;Y\|Z)+I(X;Y,Z|\tilde X)\leq \Phi(X;Y\|Z)+\epsilon.$$
Take $n$ i.i.d. copies of $(X,Y,Z,\tilde X)$ and denote them by $(X^n,Y^n,Z^n,\tilde X^n)$. We have:
\begin{align}
    \Phi(X^n;Y^n\|Z^n)\leq S(\tilde X^{n};Y^{n}\|Z^n)+I(X^{n};Y^{n},Z^{n}|\tilde X^{n})=n(S(\tilde X;Y\|Z)+I(X;Y,Z|\tilde X))\leq 
    n\Phi(X;Y\|Z)+n\epsilon.
\end{align}

\emph{Proof of property ii)} We first show that $\Phi(F,X;F,Y\|F,Z)\leq \Phi(F,X;Y\|Z)$. 
Let $\epsilon>0$ be arbitrary. Take some $\tilde{X}$ satisfying $\tilde X\mkv (F,X)\mkv (Y,Z)$ such that $$S(\tilde X;Y\|Z)+I(F,X;Y,Z|\tilde X)\leq \Phi(F,X;Y\|Z)+\epsilon.$$
The Markov chain $\tilde X\mkv (F,X)\mkv (Y,Z)$ implies the Markov chain $(F,\tilde X)\mkv (F,X)\mkv (F,Y,Z)$. Let $X'=(F,\tilde X)$. We have 
\begin{align}
   \Phi(F,X;F,Y\|F,Z)&\leq  S(X';F,Y\|F,Z)+I(F,X;F,Y,Z|X')
   \nonumber\\&=S(F,\tilde X;F,Y\|F,Z)+I(F,X;F,Y,Z|F,\tilde X)
  \nonumber \\&=S(F,\tilde X;F,Y\|F,Z)+I(F,X;Y,Z|\tilde X)-I(F;Y,Z|\tilde{X})
  \nonumber \\&\leq 
   S(\tilde X;Y\|Z)+I(F,X;Y,Z|\tilde X)\label{stepreq3}
   \\&\leq \nonumber\Phi(F,X;Y\|Z)+\epsilon.
\end{align}
where \eqref{stepreq3} follows from the assumption in \eqref{eqnSPCconj22}.

Next, we show that $\Phi(F,X;F,Y\|F,Z)\leq \Phi(X;F,Y\|Z)$. 
Let $\epsilon>0$ be arbitrary. Take some $\tilde{X}$ satisfying $\tilde X\mkv X\mkv (F,Y,Z)$ such that $$S(\tilde X;F,Y\|Z)+I(X;F,Y,Z|\tilde X)\leq \Phi(X;F,Y\|Z)+\epsilon.$$
From $\tilde X\mkv X\mkv (F,Y,Z)$, we deduce $(F,\tilde X)\mkv (F,X)\mkv (F,Y,Z)$ holds. Let $X'=(F,\tilde X)$. Thus, we have:
\begin{align*}
   \Phi(F,X;F,Y\|F,Z)&\leq  S(X';F,Y\|F,Z)+I(F,X;F,Y,Z|X')\\&=S(F,\tilde X;F,Y\|F,Z)+I(F,X;F,Y,Z|F,\tilde X)
   \\&=S(F,\tilde X;F,Y\|F,Z)+I(X;F,Y,Z|\tilde X)-I(X;F|\tilde{X})
   \\&\leq S(\tilde X;F,Y\|Z)+I(X;F,Y,Z|\tilde X)\\&\leq 
   \Phi(X;F,Y\|Z)+\epsilon.
\end{align*}
\emph{Proof of Property iii)}
Let $\epsilon>0$ be arbitrary. Take some $\tilde{X}$ satisfying $\tilde X\mkv X\mkv (Y,Z)$ such that $$S(\tilde X;Y\|Z)+I(X;Y,Z|\tilde X)\leq \Phi(X;Y\|Z)+\epsilon.$$
We have
\begin{align}
   \Phi(X;Y'\|Z)&\leq S(\tilde X;Y'|Z)+I(X;Y',Z|\tilde X)\leq S(\tilde X;Y|Z)+I(X;Y,Z|\tilde X) \leq \Phi(X;Y\|Z)+\epsilon.
\end{align}
\emph{Proof of Property iv)} We know that $S(X';Y\|Z)\geq I(X';Y)-I(Y;Z)$. Thus, $S(X';Y\|Z)+I(X;Y,Z|X')\geq I(X';Y)-I(Y;Z)+I(X;Y,Z|X')=I(X;Y)-I(Y;Z)+I(X;Z|X',Y)\geq I(X;Y)-I(Y;Z).$ This completes the proof.

\emph{Proof of Property v)} Take some arbitrary $P_{X'|X}$ reaching within $\epsilon$ of the infimum defining $\Phi(X;Y|Z)$. If we define $(W_1,W_2)$ independent of $(X',X,Y,Z)$, the Markov chain $X'\mkv(W_1,X)\mkv (W_2,Y,Z)$ will hold. 
Thus, $$\Phi(X,W_1;Y,W_2\|Z)\leq S(X';W_2,Y\|Z)+I(W_1,X;W_2,Y,Z|X')=S(X';Y\|Z)+I(X;Y,Z|X')\leq \Phi(X;Y|Z)+\epsilon.$$ This implies $\Phi(X,W_1;Y,W_2\|Z)\leq \Phi(X;Y\|Z)$. Next, to show $\Phi(X;Y\|Z)\leq \Phi(X,W_1;Y,W_2\|Z)$, take some $X'$ satisfying $X'\mkv(W_1,X)\mkv (W_2,Y,Z)$. Then, we have
$X'\mkv X\mkv (Y,Z)$
and
$$S(X';W_2,Y\|Z)+I(W_1,X;W_2,Y,Z|X')\geq S(X';Y\|Z)+I(X;Y,Z|X')\geq \Phi(X;Y\|Z).$$
\end{proof}
}

\bibliographystyle{IEEEtran}
\bibliography{mybiblio}

\end{document}